\newcommand\ie{{\em i.e.}}
\newcommand\cf{{\em cf.}~}
\def\B{\mathscr B}
\def\C{\mathbb C}
\def\d{\mathrm{d}}
\def\F{\mathscr F}
\def\G{\mathcal G}
\def\GG{\mathscr G}
\def\H{\mathcal H}
\def\K{\mathscr K}
\def\LL{\mathcal L}
\def\N{\mathbb N}
\def\R{\mathbb R}
\def\S{\mathbb S}
\def\SS{\mathscr S}
\def\U{\mathscr U}
\def\Hrond{\mathscr H}
\def\HS{\mathfrak h}
\def\Pv{\mathrm{Pv}}
\def\dom{\mathcal D}
\def\lone{\mathsf{L}^{\:\!\!1}}
\def\ltwo{\mathsf{L}^{\:\!\!2}}
\def\linf{\mathsf{L}^{\:\!\!\infty}}
\def\e{\mathop{\mathrm{e}}\nolimits}
\def\slim{\mathop{\hbox{\rm s-}\lim}\nolimits}
\newtheorem{Theorem}{Theorem}[section]
\newtheorem{Remark}[Theorem]{Remark}
\newtheorem{Lemma}[Theorem]{Lemma}
\begin{document}


\title{New expressions for the wave operators of Schr\"odinger operators in $\R^3$}

\author{S. Richard$^1$\footnote{This work has been done during the stay of S. Richard
in Japan and has been supported by the Japan Society for the Promotion of Science
(JSPS) and by ``Grants-in-Aid for scientific Research''.}~~and R. Tiedra de
Aldecoa$^2$\footnote{Supported by the Chilean Fondecyt Grant 1090008
and by the Iniciativa Cientifica Milenio ICM P07-027-F ``Mathematical Theory of
Quantum and Classical Magnetic Systems'' from the Chilean Ministry of Economy.}}

\date{\small}
\maketitle \vspace{-1cm}

\begin{quote}
\emph{
\begin{itemize}
\item[$^1$] Universit\'e de Lyon; Universit\'e
Lyon 1; CNRS, UMR5208, Institut Camille Jordan, \\
43 blvd du 11 novembre 1918, F-69622
Villeurbanne-Cedex, France.
\item[$^2$] Facultad de Matem\'aticas, Pontificia Universidad Cat\'olica de Chile,\\
Av. Vicu\~na Mackenna 4860, Santiago, Chile
\item[] \emph{E-mails:} richard@math.univ-lyon1.fr, rtiedra@mat.puc.cl
\end{itemize}
}
\end{quote}


\begin{abstract}
We prove new and explicit formulas for the wave operators of Schr\"odinger operators
in $\R^3$. These formulas put into light the very special role played by the
generator of dilations and validate the topological approach of Levinson's theorem
introduced in a previous publication. Our results hold for general (not spherically
symmetric) potentials decaying fast enough at infinity, without any assumption on the
absence of eigenvalue or resonance at $0$-energy.
\end{abstract}

\textbf{2010 Mathematics Subject Classification:} 81U05, 35P25, 35J10.

\smallskip

\textbf{Keywords:} Wave operators, Schr\"odinger operators, Levinson's theorem.

\section{Introduction and main theorem}
\setcounter{equation}{0}

The purpose of this work is to establish explicit and completely new expressions for
the wave operators of Schr\"odinger operators in $\R^3$, and as a by-product to
validate the use of the topological approach of Levinson's theorem.

The set-up is the standard one. We consider in the Hilbert space $\H:=\ltwo(\R^3)$
the free Schr\"odinger operator $H_0:=-\Delta$ and the perturbed Schr\"odinger
operator $H:=-\Delta+V$, with $V$ a measurable bounded real function on $\R^3$
decaying fast enough at infinity. In such a situation, it is well-known that the
wave operators
\begin{equation}\label{wave}
W_\pm:=\slim_{t\to\pm\infty}\e^{itH}\e^{-itH_0}
\end{equation}
exist and are asymptotically complete \cite{AJS,Pea88,RS3}, and as a consequence that
the scattering operator $S:=W_+^*W_-$ is a unitary operator in $\H$. Moreover, it is
also well-known that one can write time-independent expressions for $W_\pm$ by using
the stationary formulation of scattering theory (see \cite{BW83,Kur78,Yaf92}).

Among the many features of the wave operators, their mapping properties between
weighted Hilbert spaces, weighted Sobolev spaces and $\mathsf{L}^{\:\!\!p}$-spaces
have attracted a lot of attention (see for instance the seminal papers
\cite{JN94,Yaj95,Yaj97,Yaj06} and the preprint \cite{Bec11} which contains an
interesting historical overview and many references). Also, recent technics developed
for the study of the wave operators have been used to obtain dispersive estimates for
Schr\"odinger operators \cite{Bec12,ES04,ES06,Yaj05}. Our point here, which can be
inscribed in this line of general works on wave operators, is to show that the
time-independent expressions for $W_\pm$ can be made completely explicit, up to a
compact term. Namely, if $\B(\H)$ (resp. $\K(\H)$) denotes the set of bounded (resp.
compact) operators in $\H$, and if $A$ stands for the generator of dilations in
$\R^3$, then we have the following result:

\begin{Theorem}\label{Java}
Let $V$ satisfy $|V(x)|\le{\rm Const.}\;\!(1+|x|)^{-\sigma}$ with $\sigma>7$ for
almost every $x\in\R^3$. Then, one has in $\B(\H)$ the equalities
\begin{equation}\label{jolieformule}
W_-=1+R(A)(S-1)+K
\qquad\hbox{and}\qquad
W_+=1+\big(1-R(A)\big)(S^*-1)+K',
\end{equation}
with $R(A):=\frac12\big(1+\tanh(\pi A)-i\cosh(\pi A)^{-1}\big)$ and $K,K'\in\K(\H)$.
\end{Theorem}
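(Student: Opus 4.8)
The proof is a structural statement about the wave operators, and my plan is to derive the explicit formula from the stationary representation of $W_-$.

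First I would recall the stationary formula for the wave operator, writing $W_-$ in terms of the boundary values of the resolvents and the $T$-operator, reduced via the spectral representation of $H_0$ onto the half-line of energies. Passing to polar coordinates in momentum space, one diagonalizes $H_0$ as multiplication by the energy variable on $\ltwo(\R_+,\ltwo(\S^2))$, and in this representation the wave operator acts fibrewise through the scattering matrix $S(\lambda)$. The key is to identify how the generator of dilations $A$ intervenes: under the unitary map that diagonalizes $H_0$, the operator $A$ becomes (up to the radial Mellin transform) a self-adjoint operator whose functional calculus produces precisely the multipliers $\tanh(\pi A)$ and $\cosh(\pi A)^{-1}$ appearing in $R(A)$. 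The plan is therefore to show that
\begin{equation*}
W_- = 1 + \frac{1}{2}\big(1+\tanh(\pi A)-i\cosh(\pi A)^{-1}\big)(S-1) + K,
\end{equation*}
by massaging the stationary integral kernel into a convolution-type expression in the logarithmic radial variable, whose symbol I then recognize as $R(A)$.

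The central computation is to evaluate the difference $W_- - \big(1+R(A)(S-1)\big)$ and show it is compact. I would split $S-1$ using the low-energy and high-energy behaviour of the scattering matrix: the decay hypothesis $\sigma>7$ guarantees that $S(\lambda)-1$ is sufficiently regular and decaying in $\lambda$, including controlled behaviour as $\lambda\to 0$ and $\lambda\to\infty$, even in the presence of a zero-energy eigenvalue or resonance. The explicit multiplier $R(A)$ arises from computing the oscillatory integral $\int_0^\infty \lambda^{-1+iA}\,\d\lambda$-type expressions in the Mellin picture; the functions $\tanh$ and $\mathop{\mathrm{sech}}$ emerge from the residues or from the Fourier transform of $1/\cosh$. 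The discrepancy between the true kernel and the leading symbol $R(A)$ consists of terms that are either smoothing in the energy variable or localized, and each such term I expect to be compact by a combination of the decay of $V$ and standard criteria (e.g. the compactness of $\langle A\rangle^{-s}$ composed with energy-localizing or space-localizing factors).

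The main obstacle will be controlling the low-energy threshold behaviour without the usual assumption of absence of a zero-energy eigenvalue or resonance. In that regime $S(\lambda)-1$ need not vanish as $\lambda\to 0$, and the stationary kernel can develop a singularity; the whole point of the theorem is that this singular contribution is exactly absorbed into the universal multiplier $R(A)$ up to a compact remainder. I would handle this by isolating the singular part of $S(\lambda)-1$ near $\lambda=0$, showing that when composed with the dilation multiplier it contributes a term of the precise form $R(A)(S-1)$, and that the remainder—after subtracting this leading term—is continuous and decaying enough at both thresholds to be compact. The analogous formula for $W_+$ then follows by the relation $W_+=W_-S^*$ together with the identity $1-R(A)=\overline{R(A)}=R(-A)$, which converts the $S-1$ factor into $S^*-1$ and replaces $R(A)$ by $1-R(A)$, the additional operators again differing only by a compact term $K'$.
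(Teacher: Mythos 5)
Your overall direction---stationary formula, passage to the spectral representation of $H_0$, recognition of the kernel as a convolution in the logarithmic radial variable whose Mellin symbol is $R(A)$---is indeed the route the paper takes. But the proposal leaves the two load-bearing steps unproved, and in both places the mechanism you describe is not the one that actually works. First, the compactness of the remainder. The paper first proves an \emph{exact} identity $\F_0(W_--1)\F_0^*=-2\pi i\,M\,\{\vartheta(A_+)\otimes 1\}\,B$, where $M$ and $B$ are multiplication by operator-valued functions of the energy and $\vartheta(A_+)$ is a dilation multiplier; the compact term $K$ then comes from a single identifiable source, namely the commutator $\big[\vartheta(A_+)\otimes 1, M\big]$, which is compact by a Cordes-type criterion ($[f_1(P_1),f_2(X_1)]$ compact for $f_1,f_2\in C([-\infty,\infty])$) combined with the fact that $\lambda\mapsto\lambda^{-1/4}\F_0(\lambda)$ is a norm-continuous \emph{compact}-operator-valued function with limits at $0$ and $\infty$, so that $M$ can be approximated in norm by finite sums $\sum_j a_j(\LL)\otimes b_j$ with $b_j$ compact. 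Your plan to show directly that $W_--1-R(A)(S-1)$ is compact by splitting $S-1$ into low- and high-energy pieces and invoking ``smoothing'' or ``$\langle A\rangle^{-s}$'' criteria has no concrete mechanism behind it; without the exact factorization and the commutator argument it is unclear how you would exhibit the remainder as a compact operator.

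Second, the threshold. You assert that the possible singularity at $\lambda=0$ (in the presence of a zero-energy eigenvalue or resonance) is ``exactly absorbed into the universal multiplier $R(A)$.'' That is not what happens: the point is that the fibre operator $\lambda^{1/4}T(\lambda+i0)\F_0(\lambda)^*$ remains \emph{bounded} as $\lambda\searrow 0$, because the only genuinely singular term in the Jensen--Kato expansion, $\tfrac1{\sqrt2}\lambda^{-1/2}VP_0V\gamma_0^*$, vanishes identically thanks to the cancellation $P_0V\gamma_0^*=0$; this is precisely where the hypothesis $\sigma>7$ enters. Your proposal neither identifies this cancellation nor offers a substitute for it. Finally, the identity you invoke for $W_+$, namely $1-R(A)=\overline{R(A)}=R(-A)$, is false: with $R(\nu)=\frac12\big(1+\tanh(\pi\nu)-i\cosh(\pi\nu)^{-1}\big)$ one has $1-R(\nu)=\frac12\big(1-\tanh(\pi\nu)+i\cosh(\pi\nu)^{-1}\big)$, which equals neither $\overline{R(\nu)}$ nor $R(-\nu)$. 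The $W_+$ formula nevertheless follows from $W_+=W_-S^*$ by pure algebra, since $(S-1)S^*=-(S^*-1)$ and $S^*=1+(S^*-1)$, so no such identity is needed---but as written that step of your argument is incorrect.
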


We stress that the absence of eigenvalue or resonance at $0$-energy is not assumed.
On the other hand, if such an implicit hypothesis is made, then the same result holds
under a weaker assumption on the decay of $V$ at infinity. We also note that no
spherical symmetry is imposed on $V$.

Our motivation for proving Theorem \ref{Java} was the observation made in \cite{KR06}
(and applied to various situations in \cite{BS12,IR12,KR08,PR11,RT10}) that
Levinson's theorem can be reinterpreted as an index theorem, with a proof based on an
explicit expression for the wave operators. The main idea is to show that the wave
operators belong to a certain $C^*$-algebra. Once such an affiliation property is
settled, the machinery of non-commutative topology leads naturally to an index
theorem. In its original form, this index theorem corresponds to Levinson's theorem;
that is, the equality between the number of bound states of the operator $H$ and an
expression (trace) involving the scattering operator $S$. For more complex scattering
systems, other topological equalities involving higher degree traces can also be
derived (see \cite{KPR} for more explanations).

For the scattering theory of Schr\"odinger operators in $\R^3$, the outcomes of this
topological approach have been detailed in \cite{KR12}: It has been shown how
Levinson's theorem can be interpreted as an index theorem, and how one can derive
from it various formulas for the number of bound states of $H$ in terms of the
scattering operator and a second operator related to the $0$-energy. However, a
technical argument was missing, and an implicit assumption had to be made
accordingly. Theorem \ref{Java} makes this implicit assumption no longer necessary,
and thus allows one to apply all the results of \cite{KR12} (see Remark
\ref{Rem_comments} for some more comments).

Let us now present a more detailed description of our results. As mentioned above,
our goal was to obtain an explicit formula for the wave operators, as required by the
$C^*$-algebras framework. However, neither the time dependant formula \eqref{wave},
nor the stationary approach as presented for instance in \cite{Yaf92}, provided us
with a sufficiently precise answer. This motivated us to show in Theorem
\ref{BigMama} of Section \ref{Sec_one} that the difference $W_--1$ is unitarily
equivalent to a product of three explicit bounded operators. The result is exact and
no compact operator as in the statement of Theorem \ref{Java} has to be added. In
addition, each of the three operators is either an operator of multiplication by an
operator-valued function, or a simple function of the generator of dilation in
$\ltwo(\R_+)$. For these reasons, we expect that the formula of Theorem \ref{BigMama}
might have various applications, as for example for the mapping properties of $W_-$.
Finally, the commutation of two of the three operators reveals the presence of the
scattering operator up to a compact term, as stated in Theorem \ref{Java}. One
deduces from this new expression for $W_-$ the corresponding expression for $W_+$.

As a conclusion, we emphasize once more that the present work validates the use of
the topological approach of Levinson's theorem, as presented in \cite{KR12}. It also
implicitly shows that this $C^*$-algebraic approach of scattering theory leads to new
questions and new results, as exemplified by the explicit formula presented in
Theorem \ref{Java}.\\

\noindent
{\bf Notations\hspace{1pt}:}
$\N:=\{0,1,2,\ldots\}$ is the set of natural numbers, $\R_+:=(0,\infty)$, and $\SS$
is the Schwartz space on $\R^3$. The sets $\H^s_t$ are the weighted Sobolev spaces
over $\R^3$ with index $s\in\R$ associated to derivatives and index $t\in\R$
associated to decay at infinity \cite[Sec.~4.1]{ABG} (with the convention that
$\H^s:=\H^s_0$ and $\H_t:=\H^0_t$). The three-dimensional Fourier transform $\F$ is a
topological isomorphism of $\H^s_t$ onto $\H^t_s$ for any $s,t\in\R$. Given two
Banach spaces $\G_1$ and $\G_2$, $\B(\G_1,\G_2)$ (resp. $\K(\G_1,\G_2)$) stands for
the set of bounded (resp. compact) operators from $\G_1$ to $\G_2$. Finally,
$\otimes$ (resp. $\odot$) stands for the closed (resp. algebraic) tensor product of
Hilbert spaces or of operators.

\section{New expressions for the wave operators}\label{Sec_one}
\setcounter{equation}{0}

We start by introducing the Hilbert spaces we use throughout the paper; namely,
$\H:=\ltwo(\R^3)$, $\HS:=\ltwo(\S^2)$ and $\Hrond:=\ltwo\big(\R_+;\HS\big)$ with
respective scalar product $\langle\;\!\cdot\;\!,\;\!\cdot\;\!\rangle$ and norm
$\|\;\!\cdot\;\!\|$ indexed accordingly. The Hilbert space $\Hrond$ hosts the
spectral representation of the operator $H_0=-\Delta$ with domain $\dom(H_0)=\H^2$,
\ie, there exists a unitary operator $\F_0:\H\to\Hrond$ satisfying
$$
(\F_0H_0f)(\lambda)
=\lambda\;\!(\F_0 f)(\lambda)
\equiv(L\;\!\F_0 f)(\lambda),
\quad f\in\dom(H_0),\hbox{ a.e. }\lambda\in\R_+,
$$
with $L$ the maximal multiplication operator in $\Hrond$ by the variable in $\R_+$.
The explicit formula for $\F_0$ is
\begin{equation}\label{def_F_0}
\big((\F_0 f)(\lambda)\big)(\omega)
=\textstyle\big(\frac\lambda4\big)^{1/4}(\F f)(\sqrt\lambda\;\!\omega)
=\textstyle\big(\frac\lambda4\big)^{1/4}
\big(\gamma(\sqrt\lambda)\;\!\F f\big)(\omega),
\quad f\in\SS,~\lambda\in\R_+,~\omega\in\S^2,
\end{equation}
with $\gamma(\lambda):\SS\to\HS$ the trace operator given by
$\big(\gamma(\lambda)f\big)(\omega):=f(\lambda\;\!\omega)$.

The potential $V\in\linf(\R^3;\R)$ of the perturbed Hamiltonian $H:=H_0+V$ satisfies
for some $\sigma>0$ the condition
\begin{equation}\label{condV}
|V(x)|\le{\rm Const.}\;\!\langle x\rangle^{-\sigma},\quad\hbox{a.e. }x\in\R^3,
\end{equation}
with $\langle x\rangle:=\sqrt{1+x^2}$. Since $V$ is bounded, $H$ is self-adjoint with
domain $\dom(H)=\dom(H_0)$. Also, it is well-known \cite[Thm.~12.1]{Pea88} that the
wave operators defined by \eqref{wave} exist and are asymptotically complete if
$\sigma>1$. In stationary scattering theory one defines the wave operators in terms
of suitable limits of the resolvents of $H_0$ and $H$ on the real axis. We shall
mainly use this second approach, noting that for this model both definitions for the
wave operators do coincide (see \cite[Sec.~5.3]{Yaf92}).

Now, we recall from \cite[Eq.~2.7.5]{Yaf92} that for suitable $f,g\in\H$ the
stationary expressions for the wave operators are given by
$$
\big\langle W_\pm f,g\big\rangle_\H
=\int_\R\d\lambda\,\lim_{\varepsilon\searrow0}\frac\varepsilon\pi
\big\langle R_0(\lambda\pm i\varepsilon)f,
R(\lambda\pm i\varepsilon)g\big\rangle_\H\;\!,
$$
where $R_0(z):=(H_0-z)^{-1}$ and $R(z):=(H-z)^{-1}$, $z\in\C\setminus\R$, are the
resolvents of the operators $H_0$ and $H$. We also recall from \cite[Sec.~1.4]{Yaf92}
that the limit
$
\lim_{\varepsilon\searrow 0}
\big\langle\delta_\varepsilon(H_0-\lambda)f,g\big\rangle_\H
$
with
$
\delta_\varepsilon(H_0-\lambda)
:=\frac\varepsilon\pi R_0(\lambda\mp i\varepsilon)\;\!R_0(\lambda\pm i\varepsilon)
$
exists for a.e. $\lambda\in\R$ and that
$$
\big\langle f,g\big\rangle_\H
=\int_\R\d\lambda\,\lim_{\varepsilon\searrow0}
\big\langle\delta_\varepsilon(H_0-\lambda)f,g\big\rangle_\H\;\!.
$$
Thus, taking into account the second resolvent equation, one infers that
\begin{equation*}
\big\langle(W_\pm-1)f,g\big\rangle_\H
=-\int_\R\d\lambda\,\lim_{\varepsilon\searrow0}\big\langle
\delta_\varepsilon(H_0-\lambda)f,\big(1+VR_0(\lambda\pm i\varepsilon)\big)^{-1}\;\!
VR_0(\lambda\pm i\varepsilon)g\big\rangle_\H\;\!.
\end{equation*}

We now derive new expressions for the wave operators in the spectral representation
of $H_0$; that is, for the operators $\F_0(W_\pm-1)\F_0^*$. So, let
$\varphi,\psi$ be suitable elements of $\Hrond$ (precise conditions will be specified
in Theorem \ref{BigMama} below), then one obtains that
\begin{align*}
&\big\langle\F_0(W_\pm-1)\F_0^*\varphi,\psi\big\rangle_{\!\Hrond}\\
&=-\int_\R\d\lambda\,\lim_{\varepsilon\searrow0}
\big\langle V\big(1+R_0(\lambda\mp i\varepsilon)V\big)^{-1}\F_0^*\;\!
\delta_\varepsilon(L-\lambda)\varphi,
\F_0^*\;\!(L-\lambda\mp i\varepsilon)^{-1}\psi\big\rangle_\H\\
&=-\int_\R\d\lambda\,\lim_{\varepsilon\searrow0}\int_0^\infty\d\mu\,
\big\langle\big\{\F_0V\big(1+R_0(\lambda\mp i\varepsilon)V\big)^{-1}\F_0^*\;\!
\delta_\varepsilon(L-\lambda)\varphi\big\}(\mu),(\mu-\lambda\mp i\varepsilon)^{-1}
\psi(\mu)\big\rangle_\HS.
\end{align*}
Using the short hand notation $T(z):=V\big(1+R_0(z)V\big)^{-1}$, $z\in\C\setminus\R$,
one thus gets the equality
\begin{align}
&\big\langle\;\!\F_0(W_\pm-1)\F_0^*\varphi,\psi\big\rangle_{\!\Hrond}\nonumber\\
&=-\int_\R\d\lambda\,\lim_{\varepsilon\searrow0}\int_0^\infty\d\mu\,
\big\langle\big\{\F_0\;\!T(\lambda\mp i\varepsilon)\;\!\F_0^*\;\!
\delta_\varepsilon(L-\lambda)\varphi\big\}(\mu),
(\mu-\lambda\mp i\varepsilon)^{-1}\psi(\mu)\big\rangle_\HS.\label{start}
\end{align}

The next step is to exchange the integral over $\mu$ and the limit
$\varepsilon\searrow0$ in the previous expression. To do it properly, we need a
series of preparatory lemmas. First of all, we recall that for $\lambda>0$ the trace
operator $\gamma(\lambda)$ extends to an element of $\B(\H^s_t,\HS)$ for each $s>1/2$
and $t\in\R$ and that the map $\R_+\ni\lambda\mapsto\gamma(\lambda)\in\B(\H^s_t,\HS)$
is continuous \cite[Sec.~3]{Jen81}. As a consequence, the operator
$\F_0(\lambda):\SS\to\HS$ given by $\F_0(\lambda)f:=(\F_0f)(\lambda)$ extends to an
element of $\B(\H^s_t,\HS)$ for each $s\in\R$ and $t>1/2$, and the map
$\R_+\ni\lambda\mapsto\F_0(\lambda)\in\B(\H^s_t,\HS)$ is continuous.

We shall now strengthen these standard results.

\begin{Lemma}\label{lem1}
Let $s\ge 0$ and $t>3/2$. Then, the functions
$$
(0,\infty)\ni\lambda\mapsto\lambda^{\pm1/4}\F_0(\lambda)\in\B(\H^s_t,\HS)
$$
are continuous and bounded.
\end{Lemma}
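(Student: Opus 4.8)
The plan is to reduce the two assertions to uniform-in-$\lambda$ bounds on the trace operators $\gamma(\sqrt\lambda)$ composed with $\F$, and to obtain continuity for free from the property recalled just before the statement. Using the explicit formula \eqref{def_F_0} one has $\F_0(\lambda)=\textstyle\big(\frac\lambda4\big)^{1/4}\gamma(\sqrt\lambda)\,\F$, and therefore
\begin{equation*}
\lambda^{-1/4}\F_0(\lambda)=\tfrac1{\sqrt2}\,\gamma(\sqrt\lambda)\,\F
\qquad\hbox{and}\qquad
\lambda^{+1/4}\F_0(\lambda)=\tfrac1{\sqrt2}\,\sqrt\lambda\;\!\gamma(\sqrt\lambda)\,\F .
\end{equation*}
Since $\F\in\B(\H^s_t,\H^t_s)$ and $\H^t_s\hookrightarrow\H^t$ for $s\ge0$, it suffices to bound, uniformly in $\mu:=\sqrt\lambda\in\R_+$, the quantities $\|\gamma(\mu)g\|_\HS$ and $\mu\,\|\gamma(\mu)g\|_\HS$ by $C\,\|g\|_{\H^t}$, with $g:=\F f$ and $t>3/2$. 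For the continuity I would invoke directly the already-recalled continuity of $\R_+\ni\lambda\mapsto\F_0(\lambda)\in\B(\H^s_t,\HS)$ valid for $t>1/2$: multiplying by the scalar functions $\lambda\mapsto\lambda^{\pm1/4}$, which are continuous on $\R_+$, preserves norm-continuity, so this part is immediate.

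For the first quantity, write $\|\gamma(\mu)g\|_\HS^2=\int_{\S^2}|g(\mu\omega)|^2\,\d\omega$. Because $t>3/2$, the Sobolev embedding $\H^t(\R^3)\hookrightarrow\linf(\R^3)$ gives $|g(\mu\omega)|\le\|g\|_{\linf}\le C\,\|g\|_{\H^t}$ for all $\mu,\omega$, whence $\|\gamma(\mu)g\|_\HS^2\le4\pi\,\|g\|_{\linf}^2\le C\,\|g\|_{\H^t}^2$ uniformly in $\mu$. This settles the boundedness of $\lambda^{-1/4}\F_0(\lambda)$, both at $\mu\to0$ (where $g$ is merely continuous) and at $\mu\to\infty$. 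For $\lambda^{+1/4}\F_0(\lambda)$, however, the crude sup-norm bound is useless at large $\mu$: one needs $\mu^2\|\gamma(\mu)g\|_\HS^2=\int_{\S^2_\mu}|g|^2\,\d S$ (the trace on the sphere of radius $\mu$) to stay bounded, i.e. $\|\gamma(\mu)g\|_\HS^2$ to decay like $\mu^{-2}$, which reflects genuine decay of $g$ and cannot be seen from $\|g\|_{\linf}$ alone.

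The crux is therefore the following estimate, which I would first establish for $g$ in the Schwartz space, where the boundary term at infinity vanishes:
\begin{equation*}
\mu^2\|\gamma(\mu)g\|_\HS^2
=-\int_{\S^2}\d\omega\int_\mu^\infty\d r\,\partial_r\big(r^2|g(r\omega)|^2\big)
\le2\int_{\R^3}\frac{|g(x)|^2}{|x|}\,\d x+2\int_{\R^3}|g(x)|\,|\nabla g(x)|\,\d x .
\end{equation*}
The last term is at most $\|g\|_{\ltwo}\,\|\nabla g\|_{\ltwo}\le\|g\|_{\H^t}^2$; the first is handled by splitting the integral at $|x|=1$, bounding $|x|^{-1}\le1$ on $\{|x|>1\}$ and $|x|^{-1}\le|x|^{-2}$ on $\{|x|<1\}$ and then applying Hardy's inequality $\int_{\R^3}|x|^{-2}|g|^2\,\d x\le4\,\|\nabla g\|_{\ltwo}^2$, so that it too is bounded by $C\,\|g\|_{\H^t}^2$. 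This yields $\mu\,\|\gamma(\mu)g\|_\HS\le C\,\|g\|_{\H^t}$ uniformly in $\mu$ for Schwartz $g$; since $\gamma(\mu)\in\B(\H^t,\HS)$ for each fixed $\mu$ (as $t>1/2$), the bound extends to all $g\in\H^t$ by density. Combining the three displays and pulling back through $\F$ gives both stated bounds. I expect this last estimate — extracting the $\mu^{-2}$ decay of $\|\gamma(\mu)g\|_\HS^2$ as $\mu\to\infty$, which is exactly what forces the weight $|x|^{-1}$ and Hardy's inequality in place of a bare trace theorem — to be the main obstacle, the small-$\mu$ regime and the continuity being comparatively routine.
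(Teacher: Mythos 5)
Your proof is correct, but it takes a genuinely different route from the paper's. The paper disposes of the boundedness by citation: near $\lambda=0$ it invokes the asymptotic expansion of $\gamma(\sqrt\lambda)\;\!\F$ in $\B(\H^s_t,\HS)$ from Jensen--Kato, and near $\lambda=\infty$ it invokes Yafaev's uniform trace estimate, which directly yields the boundedness of $\lambda\mapsto\lambda^{1/4}\|\F_0(\lambda)\|_{\B(\H^s_t,\HS)}$ on all of $\R_+$ (the opposite weights on each end then being controlled by the stated ones). You instead prove everything from scratch after reducing, via $\F\in\B(\H^s_t,\H^t_s)$ and $\H^t_s\hookrightarrow\H^t$, to uniform estimates on $\|\gamma(\mu)g\|_\HS$ and $\mu\;\!\|\gamma(\mu)g\|_\HS$ for $g\in\H^t$: the Sobolev embedding $\H^t(\R^3)\hookrightarrow\linf(\R^3)$ (valid precisely because $t>3/2$) handles the first, and the integration-by-parts identity $\mu^2\|\gamma(\mu)g\|_\HS^2=-\int_{\S^2}\!\int_\mu^\infty\partial_r(r^2|g(r\omega)|^2)\,\d r\,\d\omega$, combined with Cauchy--Schwarz and Hardy's inequality, handles the second; the extension from Schwartz functions to $\H^t$ by density is legitimate since $\gamma(\mu)$ is already known to be bounded there. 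This is a sound, self-contained and more elementary argument, and it correctly identifies the large-$\mu$ decay of the trace as the nontrivial point; what the paper's version buys is brevity and reuse of the Jensen--Kato expansion, which is needed again in the proof of Lemma \ref{lem_on_sigma}. One small remark: the term $-2\int_{|x|>\mu}|x|^{-1}|g(x)|^2\,\d x$ produced by your integration by parts is nonpositive and could simply be dropped, making the appeal to Hardy's inequality unnecessary; keeping its absolute value, as you do, is harmless but not needed.
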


\begin{proof}
The continuity of the functions
$
(0,\infty)\ni\lambda\mapsto\lambda^{\pm1/4}\F_0(\lambda)\in\B(\H^s_t,\HS)
$
follows from what has been said before. For the boundedness, it is sufficient to show
that the map $\lambda\mapsto\lambda^{-1/4}\|\F_0(\lambda)\|_{\B(\H^s_t,\HS)}$ is
bounded in a neighbourhood of $0$, and that the map
$\lambda\mapsto\lambda^{1/4}\|\F_0(\lambda)\|_{\B(\H^s_t,\HS)}$ is bounded in a
neighbourhood of $+\infty$. The first bound follows from the asymptotic development
for small $\lambda>0$ of the operator $\gamma(\sqrt\lambda)\;\!\F\in\B(\H^s_t,\HS)$
(see \cite[Sec.~5]{JK79}) and the second bound follows from \cite[Thm.~1.1.4]{Yaf10}
which implies that the map
$\lambda\mapsto\lambda^{1/4}\|\F_0(\lambda)\|_{\B(\H^s_t,\HS)}$ is bounded on $\R_+$.
Note that only the case $s=0$ is presented in \cite[Thm.~1.1.4]{Yaf10}, but the
extension to the case $s\ge0$ is trivial since $\H^s_t\subset\H^0_t$ for any $s>0$.
\end{proof}

One immediately infers from Lemma \ref{lem1} that the function
$\R_+\ni\lambda\mapsto\|\F_0(\lambda)\|_{\B(\H^s_t,\HS)}\in\R$ is continuous and
bounded for any $s\ge0$ and $t>3/2$. Also, one can strengthen the statement of Lemma
\ref{lem1} in the case of the minus sign\;\!:

\begin{Lemma}\label{lem2}
Let $s >-1$ and $t>3/2$. Then, $\F_0(\lambda)\in\K(\H^s_t,\HS)$ for each
$\lambda\in\R_+$, and the function
$\R_+\ni\lambda\mapsto\lambda^{-1/4}\F_0(\lambda)\in\K(\H^s_t,\HS)$ is continuous,
admits a limit as $\lambda\searrow0$ and vanishes as $\lambda\to\infty$.
\end{Lemma}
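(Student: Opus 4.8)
The plan is to reduce the whole statement to the study of the trace maps $\gamma(\mu)$ alone. Reading off \eqref{def_F_0}, one has the factorization $\lambda^{-1/4}\F_0(\lambda)=4^{-1/4}\gamma(\sqrt\lambda)\;\!\F$, and since $\F:\H^s_t\to\H^t_s$ is a fixed topological isomorphism it is enough to analyse the map $(0,\infty)\ni\mu\mapsto\gamma(\mu)\in\K(\H^t_s,\HS)$ together with its behaviour at the two endpoints. Because $\gamma(\mu)$ only records the restriction to the bounded sphere of radius $\mu$, every estimate below may be carried out after multiplication by a cut-off and extended from $\SS$ by density.

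First I would settle compactness and continuity, which need only soft arguments. For $t>1/2$ the trace theorem gives $\gamma(\mu)\in\B\big(\H^t_s,H^{t-1/2}(\S^2)\big)$; as $t-1/2>0$ and $\S^2$ is compact, the Rellich embedding $H^{t-1/2}(\S^2)\hookrightarrow\HS$ is compact, so $\gamma(\mu)\in\K(\H^t_s,\HS)$ and hence $\F_0(\lambda)\in\K(\H^s_t,\HS)$ for every $\lambda\in\R_+$. The norm continuity of $\lambda\mapsto\F_0(\lambda)\in\B(\H^s_t,\HS)$ for $s\in\R$, $t>1/2$ was recalled before Lemma \ref{lem1}; multiplying by the continuous scalar $\lambda^{-1/4}$ preserves it, and since every value is compact and $\K(\H^s_t,\HS)$ is closed in $\B(\H^s_t,\HS)$, the map is continuous with values in $\K(\H^s_t,\HS)$.

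For the limit as $\lambda\searrow0$ I would introduce the rank-one operator $\gamma(0)g:=g(0)\,\mathbf 1_{\S^2}$ and prove $\gamma(\mu)\to\gamma(0)$ in operator norm as $\mu\searrow0$. This is exactly where $t>3/2$ is needed: it is the Sobolev threshold in $\R^3$ for $H^t\hookrightarrow C^{0,\alpha}$ with $\alpha=\min\{t-3/2,1\}>0$, which makes point evaluation at the origin bounded and yields $|g(\mu\omega)-g(0)|\le C\mu^\alpha\|g\|_{\H^t_s}$ for $\mu\le1$ after localizing near $0$, where $\langle\,\cdot\,\rangle^s$ is comparable to $1$. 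Integrating over $\omega\in\S^2$ gives $\|\gamma(\mu)-\gamma(0)\|_{\B(\H^t_s,\HS)}\le C\mu^\alpha\to0$, so $\lambda^{-1/4}\F_0(\lambda)\to4^{-1/4}\gamma(0)\,\F$, a rank-one hence compact limit.

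The vanishing as $\lambda\to\infty$ is the hard part, since no soft argument delivers it and it is where the sharp condition $s>-1$ enters. For $g\in\SS$ I set $h(r):=\|\gamma(r)g\|_\HS^2=\int_{\S^2}|g(r\omega)|^2\,\d\omega$ and $w(r):=\langle r\rangle^{2s}r^2$, so that $\int_0^\infty h\,w\,\d r=\|g\|^2_{\H^0_s}$ and, writing $\tilde h(r):=\int_{\S^2}|(\nabla g)(r\omega)|^2\,\d\omega$, $\int_0^\infty\tilde h\,w\,\d r\le C\|g\|^2_{\H^1_s}$. Cauchy--Schwarz gives $|h'|\le2\,h^{1/2}\tilde h^{1/2}$ while $|w'/w|$ is bounded on $[1,\infty)$; integrating $\tfrac{\d}{\d r}(hw)$ from $\mu$ to $+\infty$ (the boundary term vanishing because $hw$ is rapidly decaying for Schwartz $g$) and applying the arithmetic--geometric inequality yields $h(\mu)w(\mu)\le C\|g\|^2_{\H^1_s}$ for $\mu\ge1$, that is $\|\gamma(\mu)g\|_\HS\le C\mu^{-1-s}\|g\|_{\H^1_s}$. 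Extending this from $\SS$ to $\H^t_s$ by density (the trace being continuous for $t>1/2$ and $\H^t_s\hookrightarrow\H^1_s$ for $t\ge1$) gives $\|\gamma(\mu)\|_{\B(\H^t_s,\HS)}\le C\mu^{-1-s}$, which tends to $0$ as $\mu\to\infty$ precisely when $s>-1$. The delicate points are the uniformity in $g$ of this radial estimate and the sharp bookkeeping of the power of $\mu$, which together pin down the threshold $s>-1$.
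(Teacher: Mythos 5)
Your proof is correct, but it takes a genuinely different route from the paper's at each of the three steps. For compactness, the paper composes the boundedness of $\F_0(\lambda)$ on a larger space $\H^{s'}_{t'}$ (with $s'<s$, $t'<t$, $t'>1/2$) with the compact embedding $\H^s_t\subset\H^{s'}_{t'}$, whereas you gain half a derivative through the trace theorem and then invoke Rellich on the compact manifold $\S^2$; both work. For the limit as $\lambda\searrow0$, the paper simply cites the Jensen--Kato asymptotic expansion of $\gamma(\sqrt\lambda)\;\!\F$, while you give a self-contained H\"older-continuity argument and, as a bonus, identify the limit explicitly as the rank-one operator $4^{-1/4}\gamma(0)\F$ (only note that at the borderline $t=5/2$ you should take $\alpha<\min\{t-3/2,1\}$ strictly, since $H^{5/2}(\R^3)$ does not embed into $C^{0,1}$; this is cosmetic). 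For the decay at infinity --- the step where $s>-1$ actually enters --- the paper exploits the intertwining $\F_0(\lambda)\langle P\rangle^{-s}=(1+\lambda)^{-s/2}\F_0(\lambda)$ to reduce everything to the unweighted high-energy bound of Lemma \ref{lem1} (itself borrowed from Yafaev), whereas you reprove the weighted trace estimate $\|\gamma(\mu)\|_{\B(\H^t_s,\HS)}\le C\mu^{-1-s}$ from scratch by the radial integration of $\frac{\d}{\d r}(hw)$; both arguments produce the same rate $\lambda^{-(1+s)/2}$ and hence the same threshold $s>-1$. In short, the paper's proof is shorter because it recycles Lemma \ref{lem1} and the literature, while yours is more elementary and self-contained and yields an explicit decay rate at infinity and an explicit limiting operator at the origin.
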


\begin{proof}
The inclusion $\F_0(\lambda)\in\K(\H^s_t,\HS)$ follows from the compact embedding
$\H^s_t\subset \H^{s'}_{t'}$ for any $s'<s$ and $t'<t$ (see for instance
\cite[Prop.~4.1.5]{ABG}).

For the continuity and the existence of the limit as $\lambda\searrow0$ one can use
the same argument as the one used in the proof of Lemma \ref{lem1}. For the limit as
$\lambda\to\infty$, we define the regularizing operator
$\langle P\rangle^{-s}:=(1-\Delta)^{-s/2}$ and then observe that
$
\lambda^{-1/4}\F_0(\lambda)\langle P\rangle^{-s}
=\lambda^{-1/4}(1+\lambda)^{-s/2}\F_0(\lambda)
$
for each $\lambda\in\R_+$ (see \eqref{def_F_0}). It follows that
$\lim_{\lambda\to\infty}\|\lambda^{-1/4}\F_0(\lambda)\|_{\B(\H^s_t,\HS)}=0$ if and
only if
$
\lim_{\lambda\to\infty}
\|\lambda^{-1/4}(1+\lambda)^{-s/2}\F_0(\lambda)\|_{\B(\H_t,\HS)}=0
$.
So, the claim follows from Lemma \ref{lem1} (with the positive sign) as long as
$-1/4-s/2<1/4$, which is equivalent to the condition $s>-1$.
\end{proof}

From now on, we use the notation $C_{\rm c}(\R_+;\G)$ for the set of compactly
supported and continuous functions from $\R_+$ to some Hilbert space $\G$. With this
notation and what precedes, we note that the multiplication operator
$M:C_{\rm c}(\R_+;\H^s_t)\to\Hrond$ given by
\begin{equation}\label{defdeM}
(M\xi)(\lambda):=\lambda^{-1/4}\F_0(\lambda)\;\!\xi(\lambda),
\quad\xi\in C_{\rm c}(\R_+;\H^s_t),~\lambda\in\R_+,
\end{equation}
extends for $s\ge 0$ and $t>3/2$ to an element of
$\B\big(\ltwo(\R_+;\H^s_t),\Hrond\big)$.

The next step is to deal with the limit $\varepsilon\searrow0$ of the operator
$\delta_\varepsilon(L-\lambda)$ in Equation \eqref{start}. For that purpose, we shall
use the continuous extension of the scalar product
$\langle\;\!\cdot\;\!,\;\!\cdot\;\!\rangle_\H$ to a duality
$\langle\;\!\cdot\;\!,\;\!\cdot\;\!\rangle_{\H^s_t,\H^{-s}_{-t}}$ between $\H^s_t$
and $\H^{-s}_{-t}$.

\begin{Lemma}\label{lemlimite}
Take $s\ge0$, $t>3/2$, $\lambda\in\R_+$ and $\varphi\in C_{\rm c}(\R_+;\HS)$. Then,
we have
$$
\lim_{\varepsilon\searrow 0}\big\|\F_0^*\;\!\delta_\varepsilon(L-\lambda)\varphi
-\F_0(\lambda)^*\varphi(\lambda)\big\|_{\H^{-s}_{-t}}=0.
$$
\end{Lemma}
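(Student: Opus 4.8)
The plan is to exploit that, in the spectral representation of $H_0$, the operator $\delta_\varepsilon(L-\lambda)$ reduces to multiplication by a Cauchy kernel. Since $L$ is multiplication by the variable $\mu\in\R_+$, one has
$$
\big(\delta_\varepsilon(L-\lambda)\varphi\big)(\mu)
=p_\varepsilon(\mu-\lambda)\;\!\varphi(\mu),
\qquad
p_\varepsilon(\nu):=\frac\varepsilon\pi\,\frac{1}{\nu^2+\varepsilon^2},
$$
so that the whole statement becomes an approximate-identity argument for a continuous, vector-valued function.

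First I would record the regularity of the map $F(\mu):=\F_0(\mu)^*\varphi(\mu)$. By the discussion preceding Lemma \ref{lem1} and by Lemma \ref{lem1} itself, the function $\R_+\ni\mu\mapsto\F_0(\mu)\in\B(\H^s_t,\HS)$ is continuous and bounded; passing to adjoints (an isometry for the operator norm) shows that $\mu\mapsto\F_0(\mu)^*\in\B(\HS,\H^{-s}_{-t})$ is continuous and bounded as well. Composing with $\varphi\in C_{\rm c}(\R_+;\HS)$ yields $F\in C_{\rm c}(\R_+;\H^{-s}_{-t})$, with $\supp F\subset\supp\varphi$ compact in $\R_+$; in particular $\|F(\mu)\|_{\H^{-s}_{-t}}\le c$ for some $c$ and all $\mu\in\R_+$.

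Next I would identify $\F_0^*\;\!\delta_\varepsilon(L-\lambda)\varphi$ with the Bochner integral $\int_0^\infty p_\varepsilon(\mu-\lambda)\;\!F(\mu)\,\d\mu$ in $\H^{-s}_{-t}$. Testing against $g\in\H^s_t$ and using the direct-integral form of $\F_0^*$ gives
$$
\big\langle\F_0^*\;\!\delta_\varepsilon(L-\lambda)\varphi,g\big\rangle
=\int_0^\infty p_\varepsilon(\mu-\lambda)\,
\big\langle F(\mu),g\big\rangle_{\H^{-s}_{-t},\H^s_t}\,\d\mu,
$$
and since the integrand $\mu\mapsto p_\varepsilon(\mu-\lambda)F(\mu)$ is continuous and compactly supported, hence Bochner integrable in $\H^{-s}_{-t}$, this weak identity is also a strong one.

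It then remains to pass to the limit $\varepsilon\searrow0$. I would decompose
$$
\int_0^\infty p_\varepsilon(\mu-\lambda)F(\mu)\,\d\mu-F(\lambda)
=\int_0^\infty p_\varepsilon(\mu-\lambda)\big(F(\mu)-F(\lambda)\big)\,\d\mu
+\Big(\int_0^\infty p_\varepsilon(\mu-\lambda)\,\d\mu-1\Big)F(\lambda)
$$
and estimate both terms in $\H^{-s}_{-t}$. The second term vanishes because $\int_\R p_\varepsilon=1$ while, $\lambda>0$ being fixed, the missing tail satisfies $\int_{-\infty}^{-\lambda}p_\varepsilon(\nu)\,\d\nu\le\frac\varepsilon{\pi\lambda}\to0$. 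For the first term, given $\eta>0$ I would choose $\delta>0$ with $\|F(\mu)-F(\lambda)\|_{\H^{-s}_{-t}}<\eta$ whenever $|\mu-\lambda|<\delta$ (continuity of $F$ at $\lambda$), and split the integral at $|\mu-\lambda|=\delta$: the inner part is bounded by $\eta\int_\R p_\varepsilon=\eta$, and the outer part by $2c\int_{|\nu|\ge\delta}p_\varepsilon(\nu)\,\d\nu=\frac{4c}\pi\big(\frac\pi2-\arctan(\delta/\varepsilon)\big)\to0$. Letting first $\varepsilon\searrow0$ and then $\eta\searrow0$ finishes the proof. The one delicate point—the main obstacle—is precisely this half-line bookkeeping: $p_\varepsilon$ is normalized on all of $\R$, so one must verify separately that confining its mass to $(0,\infty)$ costs nothing in the limit, which is exactly the tail bound above and uses $\lambda>0$ in an essential way.
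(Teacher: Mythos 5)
Your argument is correct, and at bottom it rests on the same idea as the paper's proof: in the spectral representation $\delta_\varepsilon(L-\lambda)$ is multiplication by the Poisson kernel $p_\varepsilon(\cdot-\lambda)$, which acts as an approximate identity, and the limit follows from the continuity and boundedness of $\mu\mapsto\F_0(\mu)\in\B(\H^s_t,\HS)$ (recorded around Lemma \ref{lem1}) together with the continuity and compact support of $\varphi$. The organization, however, differs. The paper never forms the product $F(\mu)=\F_0(\mu)^*\varphi(\mu)$: it computes the $\H^{-s}_{-t}$-norm as a supremum over test vectors $f\in\SS$ with $\|f\|_{\H^s_t}=1$ and splits the resulting scalar difference into the three pieces \eqref{h1}--\eqref{h3}, isolating separately the increment of $\F_0(\cdot)$, the increment of $\varphi(\cdot)$, and the normalization defect of the kernel; this avoids any explicit use of adjoint-valued functions or Bochner integrals. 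You instead bundle both continuous factors into a single $F\in C_{\rm c}\big(\R_+;\H^{-s}_{-t}\big)$, justify the Bochner-integral representation of $\F_0^*\delta_\varepsilon(L-\lambda)\varphi$ by a weak-to-strong argument, and run a two-term splitting directly in the norm of $\H^{-s}_{-t}$. This is slightly more economical, and it makes fully explicit the one point the paper passes over quickly: the kernel is normalized on $\R$ but integrated only over $(0,\infty)$, so the missing mass $\int_{-\infty}^{-\lambda}p_\varepsilon(\nu)\;\!\d\nu\le\varepsilon/(\pi\lambda)$ must be shown to vanish, which is where $\lambda>0$ enters; the paper asserts that the corresponding term \eqref{h3} ``clearly'' converges, whereas your $\arctan$ tail bounds supply the quantitative justification. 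Both routes use exactly the same analytic inputs, so neither is more general, but yours is the more self-contained write-up.
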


\begin{proof}
By definition of the norm of $\H^{-s}_{-t}$, one has
\begin{align}
&\big\|\F_0^*\;\!\delta_\varepsilon(L-\lambda)\varphi
-\F_0(\lambda)^*\varphi(\lambda)\big\|_{\H^{-s}_{-t}}\nonumber\\
&=\sup_{f\in\SS,\,\|f\|_{\H^s_t=1}}
\Big|\big\langle f,\F_0^*\;\!\delta_\varepsilon(L-\lambda)\varphi
-\F_0(\lambda)^*\varphi(\lambda)\big\rangle_{\H^s_t,\H^{-s}_{-t}}\Big|\nonumber\\
&=\sup_{f\in\SS,\,\|f\|_{\H^s_t}=1}
\left|\frac1\pi\int_0^\infty\d\mu\,\left\langle\F_0(\mu)f,
\frac\varepsilon{(\mu-\lambda)^2+\varepsilon^2}\;\!\varphi(\mu)\right\rangle_\HS
-\big\langle\F_0(\lambda)f,\varphi(\lambda)\big\rangle_\HS\right|\nonumber\\
&\le\sup_{f\in\SS,\,\|f\|_{\H^s_t}=1}
\left|\frac1\pi\int_0^\infty\d\mu\,\left\langle\big(\F_0(\mu)-\F_0(\lambda)\big)f,
\frac\varepsilon{(\mu-\lambda)^2+\varepsilon^2}\;\!\varphi(\mu)\right\rangle_\HS
\right|\label{h1}\\
&\quad+\sup_{f\in\SS,\,\|f\|_{\H^s_t}=1}
\left|\frac1\pi\int_0^\infty\d\mu\,\left\langle\F_0(\lambda)f,
\frac\varepsilon{(\mu-\lambda)^2+\varepsilon^2}
\big(\varphi(\mu)-\varphi(\lambda)\big)\right\rangle_\HS\right|\label{h2}\\
&\quad+\sup_{f\in\SS,\,\|f\|_{\H^s_t}=1}
\left|\frac1\pi\int_0^\infty\d\mu\,\left\langle\F_0(\lambda)f,
\frac\varepsilon{(\mu-\lambda)^2+\varepsilon^2}\;\!\varphi(\lambda)
\right\rangle_\HS
-\big\langle\F_0(\lambda)f,\varphi(\lambda)\big\rangle_\HS\right|\label{h3}
\end{align}
Clearly, the term \eqref{h3} converges to $0$ as $\varepsilon\searrow0$, as expected.
Furthermore, the term \eqref{h1} converges to $0$ as $\varepsilon\searrow0$ because
of the continuity and the boundedness of the function
$\lambda\mapsto\|\F_0(\lambda)\|_{\B(\H^s_t,\HS)}$ (mentioned just after Lemma
\ref{lem1}) together with the boundedness of the map
$\lambda\mapsto\|\varphi(\lambda)\|_\HS$. Finally, the term \eqref{h2} also converges
to $0$ as $\varepsilon \searrow 0$ because of the continuity and the boundedness of
the function $\lambda\mapsto \varphi(\lambda)\in \HS$ together with the boundedness
of the function $\lambda\mapsto\|\F_0(\lambda)\|_{\B(\H^s_t,\HS)}$.
\end{proof}

The next necessary result concerns the limits
$T(\lambda\pm i0):=\lim_{\varepsilon\searrow0}T(\lambda\pm i\varepsilon)$,
$\lambda\in\R_+$. Fortunately, it is already known (see for example
\cite[Lemma 9.1]{JK79}) that if $\sigma>1$ in \eqref{condV} then the limit
$
\big(1+R_0(\lambda+i0)V\big)^{-1}
:=\lim_{\varepsilon\searrow0}\big(1+R_0(\lambda+i\varepsilon)V\big)^{-1}
$
exists in $\B(\H_{-t},\H_{-t})$ for any $t\in(1/2,\sigma-1/2)$, and that the map
$\R_+\ni\lambda\mapsto\big(1+R_0(\lambda+i0)V\big)^{-1}\in\B(\H_{-t},\H_{-t})$ is
continuous. Corresponding results for $T(\lambda+i\varepsilon)$ follow immediately.
Note that only the limits from the upper half-plane have been computed in
\cite{JK79}, even though similar results for $T(\lambda-i0)$ could have been derived.
Due to this lack of information in the literature and for the simplicity of the
exposition, we consider from now on only the wave operator $W_-$.

\begin{Lemma}\label{lem_on_sigma}
Take $\sigma>5$ in \eqref{condV} and let $t\in(5/2,\sigma-5/2)$. Then, the function
$$
\R_+\ni\lambda\mapsto
\lambda^{1/4}\;\!T(\lambda+i0)\F_0(\lambda)^*\in\B(\HS,\H_{\sigma-t})
$$
is continuous and bounded, and the multiplication operator
$B:C_{\rm c}\big(\R_+;\HS\big)\to\ltwo(\R_+;\H_{\sigma-t})$ given by
\begin{equation}\label{defdeB}
(B\;\!\varphi)(\lambda)
:=\lambda^{1/4}\;\!T(\lambda+i0)\F_0(\lambda)^*\varphi(\lambda)\in\H_{\sigma-t},
\quad\varphi\in C_{\rm c}\big(\R_+;\HS\big),~\lambda\in\R_+,
\end{equation}
extends to an element of $\B\big(\Hrond,\ltwo(\R_+;\H_{\sigma-t})\big)$.
\end{Lemma}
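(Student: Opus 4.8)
The plan is to realise the fibre operator as the composition
\[
\lambda^{1/4}\,T(\lambda+i0)\,\F_0(\lambda)^*
=V\,\big(1+R_0(\lambda+i0)V\big)^{-1}\,\lambda^{1/4}\F_0(\lambda)^* ,
\]
and to read off its mapping properties from the chain $\HS\to\H_{-t}\to\H_{-t}\to\H_{\sigma-t}$. The adjoint of Lemma \ref{lem1} (with $s=0$) gives $\lambda^{1/4}\F_0(\lambda)^*\in\B(\HS,\H_{-t})$, continuous and \emph{bounded} on all of $\R_+$; the factor $\big(1+R_0(\lambda+i0)V\big)^{-1}$ lives in $\B(\H_{-t},\H_{-t})$ by the Jensen--Kato result quoted just before the lemma; and multiplication by $V$ maps $\H_{-t}$ continuously into $\H_{\sigma-t}$ because $\langle x\rangle^{\sigma}V\in\linf$. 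Since $B$ is a multiplication operator, once the fibre is shown to be a continuous and bounded function of $\lambda$ valued in $\B(\HS,\H_{\sigma-t})$, the estimate $\|B\varphi\|_{\ltwo(\R_+;\H_{\sigma-t})}\le\big(\sup_{\lambda\in\R_+}\|\lambda^{1/4}T(\lambda+i0)\F_0(\lambda)^*\|_{\B(\HS,\H_{\sigma-t})}\big)\,\|\varphi\|_\Hrond$ together with the density of $C_{\rm c}(\R_+;\HS)$ in $\Hrond$ yields the asserted extension. Thus everything reduces to the continuity and the uniform boundedness of $\lambda\mapsto\lambda^{1/4}T(\lambda+i0)\F_0(\lambda)^*$.

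Continuity on $\R_+$ is the easy part: the fibre is a product of three operator-valued functions, each norm-continuous and locally bounded on $\R_+$ (the first by Lemma \ref{lem1}, the second by the quoted continuity of $\lambda\mapsto(1+R_0(\lambda+i0)V)^{-1}$, the third being constant). For this step alone the weights $t>3/2$ and $t<\sigma-1/2$ would suffice. Boundedness on $\R_+$ I would then obtain, as in the proofs of Lemmas \ref{lem1} and \ref{lem2}, by controlling the two ends separately and invoking that a continuous function on $(0,\infty)$ with finite limits at $0$ and at $+\infty$ is bounded. The high-energy end is immediate: as $\lambda\to+\infty$ one has $R_0(\lambda+i0)\to0$ in $\B(\H_t,\H_{-t})$, so $(1+R_0(\lambda+i0)V)^{-1}$ stays bounded (it tends to $1$), while $\lambda^{1/4}\F_0(\lambda)^*$ remains bounded by Lemma \ref{lem1}; hence the fibre is bounded there.

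The main obstacle is the threshold $\lambda\searrow0$, where $(1+R_0(\lambda+i0)V)^{-1}$ need not remain bounded: a zero-energy resonance (resp.\ eigenvalue) produces a singular part behaving like $\lambda^{-1/2}$ (resp.\ $\lambda^{-1}$) times a finite-rank operator. The claim is that the prefactor $\lambda^{1/4}$ and the vanishing of $\F_0(\lambda)^*$ absorb this blow-up. Indeed, the adjoint of Lemma \ref{lem2} gives $\lambda^{-1/4}\F_0(\lambda)^*$ a limit as $\lambda\searrow0$, whence $\|\lambda^{1/4}\F_0(\lambda)^*\|=O(\lambda^{1/2})$. In the resonance case the crude count $\lambda^{-1/2}\cdot O(\lambda^{1/2})=O(1)$ already closes the estimate. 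In the eigenvalue case the same count only yields $O(\lambda^{-1/2})$, and I would instead use that the singular term is finite-rank and factors through the pairing with $\F_0(\lambda)V\psi$, where $\psi$ is a zero-energy eigenfunction. A genuine eigenfunction satisfies the orthogonality $\int_{\R^3}V\psi\,\d x=0$, i.e.\ $\widehat{V\psi}(0)=0$; since by \eqref{def_F_0} $\F_0(\lambda)V\psi=(\lambda/4)^{1/4}\gamma(\sqrt\lambda)\widehat{V\psi}$ and $\widehat{V\psi}(\sqrt\lambda\,\omega)=O(\sqrt\lambda)$, this pairing is $O(\lambda^{3/4})$ rather than the generic $O(\lambda^{1/4})$, so $\lambda^{1/4}\cdot\lambda^{-1}\cdot O(\lambda^{3/4})=O(1)$. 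Turning this heuristic into a proof requires the quantitative Jensen--Kato expansion \cite{JK79} of $(1+R_0(\lambda+i0)V)^{-1}$ near $0$, with its singular terms isolated and its remainder controlled; it is the order of this expansion, together with the differentiability of $\widehat{V\psi}$ it presupposes, that forces the stronger hypotheses $\sigma>5$ and $t\in(5/2,\sigma-5/2)$ in place of the range $t\in(3/2,\sigma-1/2)$ needed for mere continuity.

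Granting these three steps, $\lambda\mapsto\lambda^{1/4}T(\lambda+i0)\F_0(\lambda)^*$ is continuous on $\R_+$ and admits finite limits at both endpoints, hence is bounded, and the multiplication-operator estimate of the first paragraph then finishes the proof.
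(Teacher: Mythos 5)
Your proposal is correct and follows essentially the same route as the paper: continuity from the cited Jensen--Kato results, high-energy boundedness from Lemma \ref{lem1} together with the boundedness of $T(\lambda+i0)$, and threshold boundedness from the Jensen--Kato expansions, with the most singular ($\lambda^{-1}$, eigenvalue) term killed by the orthogonality $\int_{\R^3}V\psi\,\d x=0$. That orthogonality is exactly the identity $P_0V\gamma_0^*=0$ which the paper quotes from the proof of \cite[Thm.~5.3]{JK79}, so the step you flag as requiring the quantitative expansion is precisely how the paper's proof closes.
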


\begin{proof}
The continuity of the function
$\lambda\mapsto\lambda^{1/4}\;\!T(\lambda+i0)\F_0(\lambda)^*\in\B(\HS,\H_{\sigma-t})$
follows from what has been said before. For the boundedness, it is sufficient to show
that the function
\begin{equation}\label{amontrer}
\R_+\ni\lambda\mapsto
\lambda^{1/4}\big\|\;\!T(\lambda+i0)\F_0(\lambda)^*\big\|_{\B(\HS,\H_{\sigma-t})}
\end{equation}
is bounded in a neighbourhood of $0$ and in a neighbourhood of $+\infty$.

For $\lambda>1$, we know from \cite[Lemma~9.1]{JK79} that the function
$\lambda\mapsto\|T(\lambda+i0)\|_{\B(\H_{-t},\H_{\sigma-t})}$ is bounded. We also
know from Lemma \ref{lem1} that the function
$\R_+\ni\lambda\mapsto\lambda^{1/4}\|\F_0(\lambda)^*\|_{\B(\HS,\H_{-t})}$ is bounded.
Thus, the function \eqref{amontrer} stays bounded in a neighbourhood of $+\infty$.

For $\lambda$ in a neighbourhood of $0$, we use asymptotic developments for
$T(\lambda+i0)$ and $\F_0(\lambda)^*$. The development for $\F_0(\lambda)^*$ (to be
found in \cite[Sec.~5]{JK79}) can be written as follows. For each $s\in\R$, there
exist $\gamma_0^*,\gamma_1^*\in\B(\HS,\H^s_{-t})$ such that
$$
\textstyle
\F_0(\lambda)^*=\big(\frac\lambda4\big)^{1/4}
\big(\gamma_0^*-i\lambda^{1/2}\gamma_1^*+o(\lambda^{1/2})\big)
\quad\hbox{in}\quad\B\big(\HS,\H^s_{-t}\big)\hbox{ as }\lambda\searrow 0.
$$
The development for $T(\lambda+ i0)$ as $\lambda\searrow 0$ has been computed in
\cite[Lemmas~4.1 to~4.5]{JK79}. It varies drastically depending on the presence of
$0$-energy eigenvalue and/or $0$-energy resonance. We reproduce here the most
singular behavior possible (\cf \cite[Lemma~4.5]{JK79})\;\!:
$$
T(\lambda+i0)=\lambda^{-1}VP_0V-i\lambda^{-1/2}\;\!C+O(1)
\quad\hbox{in}\quad\B(\H^1_{-t};\H_{\sigma-t})\hbox{ as }\lambda\searrow0,
$$
with $P_0$ the orthogonal projection onto $\ker(H)$ and
$C\in\B(\H^1_{-t};\H_{\sigma-t})$. Now, using these expressions for $\F_0(\lambda)^*$
and $T(\lambda+i0)$, one can write $\lambda^{1/4}T(\lambda+i0)\F_0(\lambda)^*$ as a
sum of terms bounded in $\B(\HS,\H_{\sigma-t})$ as $\lambda\searrow 0$ plus a term
$\frac1{\sqrt2}\lambda^{-1/2}VP_0V\gamma_0^*$ which is apparently unbounded. However,
we know from the proof of \cite[Thm.~5.3]{JK79} that $P_0V\gamma_0^*=0$. Thus, all
the terms in the asymptotic development of
$\lambda^{1/4}\;\!T(\lambda+i0)\F_0(\lambda)^*$ are effectively bounded in
$\B(\HS,\H_{\sigma-t})$ as $\lambda\searrow 0$, and thus the claim about boundedness
is proved. The claim  on the operator $B$ is then a simple consequence of what
precedes.
\end{proof}

\begin{Remark}
If one assumes that $H$ has no $0$-energy eigenvalue and/or no $0$-energy resonance,
then one can prove Lemma \ref{lem_on_sigma} under a weaker assumption on the decay of
$V$ at infinity. However, even if the absence of $0$-energy eigenvalue and $0$-energy
resonance is generic, we do not want to make such an implicit assumption in the
sequel. The condition on $V$ is thus imposed adequately.
\end{Remark}

Before deriving our main result, we recall the action of the dilation group
$\{U^+_\tau\}_{\tau\in\R}$ in $\ltwo(\R_+)$, namely,
$$
\big(U^+_\tau f\big)(\lambda):=\e^{\tau/2}f(\e^\tau\lambda),
\quad f\in C_{\rm c}(\R_+),~\lambda\in\R_+,~\tau\in\R,
$$
and denote its self-adjoint generator by $A_+$. We also introduce the
function $\vartheta\in C(\R)\cap\linf(\R)$ given by
\begin{equation}\label{defvar}
\vartheta(\nu):=\frac12\big(1-\tanh(2\pi\nu)-i\cosh(2\pi\nu)^{-1}\big),\quad\nu\in\R.
\end{equation}
Finally, we recall that the Hilbert spaces $\ltwo(\R_+;\H^s_t)$ and $\Hrond$ can be
naturally identified with the Hilbert spaces $\ltwo(\R_+)\otimes\H^s_t$ and
$\ltwo(\R_+)\otimes\HS$.

\begin{Theorem}\label{BigMama}
Take $\sigma>7$ in \eqref{condV} and let $\,t\in(7/2,\sigma-7/2)$. Then, one has in
$\B(\Hrond)$ the equality
\begin{equation}\label{ademontrer}
\F_0(W_--1)\;\!\F_0^*
=-2\pi i\;\!M\;\!\big\{\vartheta(A_+)\otimes1_{\H_{\sigma-t}}\big\}B,
\end{equation}
with $M$ and $B$ defined in \eqref{defdeM} and \eqref{defdeB}.
\end{Theorem}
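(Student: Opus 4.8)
The plan is to start from the stationary formula \eqref{start} with the lower sign, i.e.\ with $T(\lambda+i\varepsilon)$ and $(\mu-\lambda+i\varepsilon)^{-1}$, and to carry out the limit $\varepsilon\searrow0$ in two stages. First I would dispose of the vector $\F_0^*\;\!\delta_\varepsilon(L-\lambda)\varphi$: by Lemma \ref{lemlimite} (with $s=0$) it converges in $\H_{-t}$ to $\F_0(\lambda)^*\varphi(\lambda)$, while the limiting absorption result recalled before Lemma \ref{lem_on_sigma} gives $T(\lambda+i\varepsilon)\to T(\lambda+i0)$ in $\B(\H_{-t},\H_{\sigma-t})$; combining these with the uniform bounds of Lemmas \ref{lem1} and \ref{lem_on_sigma} (this is where the stronger decay $\sigma>7$ with $t\in(7/2,\sigma-7/2)$ is needed, so that all the relevant norms stay dominated) lets me replace the whole inner vector by $T(\lambda+i0)\F_0(\lambda)^*\varphi(\lambda)\in\H_{\sigma-t}$. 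Since $\varphi\in C_{\rm c}(\R_+;\HS)$ vanishes near and below $\lambda=0$, the contribution of $\lambda\le0$ disappears in the limit, so the integral over $\R$ reduces to one over $\R_+$. Working with $\varphi,\psi$ in a suitable dense subset of $\Hrond$, one is then left with
\begin{equation*}
\big\langle\F_0(W_--1)\F_0^*\varphi,\psi\big\rangle_{\!\Hrond}
=-\int_0^\infty\!\d\lambda\,\lim_{\varepsilon\searrow0}\int_0^\infty\!\d\mu\,
\frac{F(\mu,\lambda)}{\mu-\lambda-i\varepsilon},
\end{equation*}
where $F(\mu,\lambda):=\big\langle\F_0(\mu)\;\!T(\lambda+i0)\F_0(\lambda)^*\varphi(\lambda),\psi(\mu)\big\rangle_\HS$, the sign of $i\varepsilon$ having flipped because $\langle\;\!\cdot\;\!,\;\!\cdot\;\!\rangle_\HS$ is antilinear in its second argument.

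The hard part is the remaining limit: the kernel $(\mu-\lambda-i\varepsilon)^{-1}$ is not absolutely integrable as $\varepsilon\searrow0$, so the exchange of $\lim_\varepsilon$ with the $\mu$-integral cannot be done by brute-force domination. I would justify it by splitting the $\mu$-integral into a region away from $\mu=\lambda$, where dominated convergence applies directly, and a neighbourhood of $\mu=\lambda$, where the (local Hölder) continuity of $\mu\mapsto F(\mu,\lambda)$ --- guaranteed by the continuity statements of Lemmas \ref{lem1} and \ref{lem_on_sigma} together with the smoothness of $\varphi,\psi$ --- controls the principal-value part. The outcome is the absolutely convergent double integral with the boundary-value kernel $(\mu-\lambda-i0)^{-1}$, which I then integrate in $\lambda$ as well.

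It remains to recognise this kernel as the announced function of the dilation generator, and this is the conceptual heart of the argument. Writing out the definitions \eqref{defdeM}--\eqref{defdeB}, the claimed right-hand side of \eqref{ademontrer}, paired with $\psi$, equals
\begin{equation*}
-2\pi i\int_0^\infty\!\d\mu\int_0^\infty\!\d\lambda\;\mu^{-1/4}\lambda^{1/4}\,k_\vartheta(\mu,\lambda)\,F(\mu,\lambda),
\end{equation*}
where $k_\vartheta$ is the integral kernel of $\vartheta(A_+)$ on $\ltwo(\R_+)$. Because $A_+$ generates the dilations $U^+_\tau$, this kernel has the scale-covariant form $k_\vartheta(\mu,\lambda)=(\mu\lambda)^{-1/2}g\big(\ln(\lambda/\mu)\big)$ with $g$ related to $\vartheta$ by Fourier transformation in the dilation variable ($\widehat\vartheta=2\pi g$). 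Comparing the two displayed expressions, the identity \eqref{ademontrer} is equivalent to the pointwise kernel equality
\begin{equation*}
\frac{1}{\mu-\lambda-i0}
=2\pi i\,\mu^{-1/4}\lambda^{1/4}(\mu\lambda)^{-1/2}\,g\big(\ln(\lambda/\mu)\big).
\end{equation*}
Both sides are homogeneous of degree $-1$ in $(\mu,\lambda)$, so with $x:=\ln(\lambda/\mu)$ this collapses to the one-variable identity $g(x)=\tfrac1{2\pi i}\,e^{x/4}(1-e^x-i0)^{-1}$. I would then verify that the Fourier transform of $x\mapsto e^{x/4}(1-e^x-i0)^{-1}$ produces precisely the combination $\tfrac12\big(1-\tanh(2\pi\nu)-i\cosh(2\pi\nu)^{-1}\big)=\vartheta(\nu)$ of \eqref{defvar}; it is exactly this standard Fourier computation that makes the generator of dilations and the specific function $\vartheta$ appear.

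Finally, since $\vartheta\in\linf(\R)$ the operator $\vartheta(A_+)$ is bounded, and by Lemmas \ref{lem1} and \ref{lem_on_sigma} so are $M$ and $B$; hence the right-hand side of \eqref{ademontrer} belongs to $\B(\Hrond)$. Having established the equality on a dense set of $\varphi,\psi$, I would extend it to all of $\Hrond$ by continuity, thereby proving \eqref{ademontrer} in $\B(\Hrond)$.
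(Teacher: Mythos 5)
Your strategy is viable and, at the level of execution, genuinely different from the paper's. The paper never decouples the three occurrences of $\varepsilon$ in \eqref{start}: it writes $(\mu-\lambda+i\varepsilon)^{-1}=-i\int_0^\infty\d z\,\e^{i(\mu-\lambda)z}\e^{-\varepsilon z}$, applies Fubini, and dominates the resulting $z$-integral by showing that $z\mapsto\|(\F_1^*h_\lambda)(z)\|_{\H_{-s}}$ is integrable --- an integration-by-parts estimate (via $\langle P_1\rangle^2$, costing two powers of $\langle x\rangle$) which is precisely what consumes the hypothesis $s=\sigma-t>7/2$, i.e.\ $\sigma>7$. The limiting operator is then identified through the substitution $\nu+\lambda=\e^\mu\lambda$, which makes the dilation group $U^+_\mu$ appear, together with the tabulated distributional Fourier transform of $\vartheta$. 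Your route --- Sokhotski--Plemelj for the kernel, then a direct computation of the (distributional) integral kernel of $\vartheta(A_+)$ plus a homogeneity argument --- is essentially the same Fourier computation read backwards, and your one-variable identity $g(x)=\frac1{2\pi i}\e^{x/4}(1-\e^x-i0)^{-1}$ does check out against \eqref{defvar}: using $\frac{1}{1-\e^x-i0}=-\Pv\frac{1}{\e^x-1}+i\pi\delta(x)$ and $\frac{\e^{x/4}}{\e^x-1}=\frac14\big(\frac{1}{\sinh(x/4)}-\frac{1}{\cosh(x/4)}\big)$, the standard transforms of $1/\sinh$ and $1/\cosh$ reproduce exactly $\frac12\big(1-\tanh(2\pi\nu)-i\cosh(2\pi\nu)^{-1}\big)$, and the delta-on-the-diagonal term gives the correct $\frac12\big(S(\lambda)-1\big)$ contribution.

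The one genuine weak point is the limit interchange. First, continuity of $\mu\mapsto F(\mu,\lambda)$ --- which is all that Lemmas \ref{lem1} and \ref{lem_on_sigma} provide --- is not sufficient for the Plemelj argument: both the existence of the principal value and the convergence of $\int\d\mu\,(\mu-\lambda-i\varepsilon)^{-1}F(\mu,\lambda)$ require a Dini or H\"older modulus of continuity near $\mu=\lambda$. You must prove this; for $\psi\in C_{\rm c}^\infty(\R_+)\odot C(\S^2)$ it amounts to differentiating $\mu\mapsto\F_0(\mu)^*\psi(\mu)$ in $\H_{-t}$, which costs one power of $\langle x\rangle$ and is harmless for $t>7/2$, but it is an extra lemma, not a citation. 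Second, you cannot simply replace the inner vector by its $\varepsilon\searrow0$ limit before treating the kernel: the same $\varepsilon$ sits in $T(\lambda+i\varepsilon)$, in $\delta_\varepsilon(L-\lambda)$ and in the resolvent kernel, and since $\int_K|\mu-\lambda+i\varepsilon|^{-1}\d\mu\sim|\ln\varepsilon|$, norm convergence of the inner vector alone does not kill the error term. The saving grace is that the $\varepsilon$-dependence and the $\mu$-dependence of $F_\varepsilon(\mu,\lambda)$ factorize (all the $\varepsilon$ lives in the $\mu$-independent left factor, whose $\H_{\sigma-t}$-norm is bounded uniformly in $\varepsilon$ by Lemma \ref{lemlimite} and the limiting absorption principle), so the H\"older bound in $\mu$ holds uniformly in $\varepsilon$ and Plemelj can be run before passing to the limit in the left factor. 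Finally, the limiting expression is a distributional pairing (delta plus principal value), not an ``absolutely convergent double integral'', and the kernel identity must be matched in that distributional sense. With these points repaired your argument closes; it even appears to need only $\sigma>5$ rather than $\sigma>7$, the stronger decay being an artefact of the paper's Fubini-plus-domination scheme rather than of your principal-value scheme.
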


The proof below consists in two parts. First, we show that the expression
\eqref{start} is well-defined for $\varphi$ and $\psi$ in dense subsets of $\Hrond$
(and thus equal to
$\big\langle\;\!\F_0(W_\pm-1)\;\!\F_0^*\varphi,\psi\big\rangle_{\!\Hrond}$ due to the
computations presented at the beginning of the section). Second, we show that the
expression \eqref{start} is equal to
$
\big\langle-2\pi i\;\!M\;\!
\big\{\vartheta(A_+)\otimes1_{\H_{\sigma-t}}\big\}B\varphi,\psi\big\rangle_{\!\Hrond}
$.

\begin{proof}
Take $\varphi\in C_{\rm c}(\R_+;\HS)$ and
$\psi\in C_{\rm c}^\infty(\R_+)\odot C(\S^2)$, and set $s:=\sigma-t>7/2$. Then, we
have for each $\varepsilon>0$ and $\lambda\in\R_+$ the inclusions
$$
g_\varepsilon(\lambda):=\lambda^{1/4}\;\!T(\lambda+i\varepsilon)\;\!\F_0^*\;\!
\delta_\varepsilon(L-\lambda)\varphi\in\H_s
\qquad\hbox{and}\qquad
f(\lambda):=\lambda^{-1/4}\F_0(\lambda)^*\psi(\lambda)\in\H_{-s}\;\!.
$$
It follows that the expression \eqref{start} is equal to
\begin{align*}
&-\int_\R\d\lambda\,\lim_{\varepsilon\searrow0}\int_0^\infty\d\mu\,\big\langle
T(\lambda+i\varepsilon)\;\!\F_0^*\;\!\delta_\varepsilon(L-\lambda)\varphi,
(\mu-\lambda+i\varepsilon)^{-1}\;\!\F_0(\mu)^*\;\!\psi(\mu)
\big\rangle_{\H_s,\H_{-s}}\\
&=-\int_{\R_+}\d\lambda\,\lim_{\varepsilon\searrow0}\int_0^\infty\d\mu\,
\bigg\langle g_\varepsilon(\lambda),\frac{\lambda^{-1/4}\mu^{1/4}}
{\mu-\lambda+i\varepsilon}\;\!f(\mu)\bigg\rangle_{\H_s,\H_{-s}}.
\end{align*}

Now, using the formula
$
(\mu-\lambda+i\varepsilon)^{-1}
=-i\int_0^\infty\d z\e^{i(\mu-\lambda)z}\e^{-\varepsilon z}
$
and then applying Fubini's theorem, one obtains that
\begin{align}
&\lim_{\varepsilon\searrow0}\int_0^\infty\d\mu\,\bigg\langle g_\varepsilon(\lambda),
\frac{\lambda^{-1/4}\mu^{1/4}}{\mu-\lambda+i\varepsilon}\;\!f(\mu)
\bigg\rangle_{\H_s,\H_{-s}}\nonumber\\
&=-i\lim_{\varepsilon\searrow0}\int_0^\infty\d z\,\e^{-\varepsilon z}
\bigg\langle g_\varepsilon(\lambda),\int_0^\infty\d\mu\,\e^{i(\mu-\lambda)z}
\lambda^{-1/4}\mu^{1/4}f(\mu)\bigg\rangle_{\H_s,\H_{-s}}\nonumber\\
&=-i\lim_{\varepsilon\searrow0}\int_0^\infty\d z\,\e^{-\varepsilon z}
\bigg\langle g_\varepsilon(\lambda),\int_{-\lambda}^\infty\d\nu\,\e^{i\nu z}
\left(\frac{\nu+\lambda}{\lambda}\right)^{1/4}f(\nu+\lambda)
\bigg\rangle_{\H_s,\H_{-s}}.\label{eq2}
\end{align}
Furthermore, the integrant in \eqref{eq2} can be bounded independently of
$\varepsilon\in(0,1)$. Indeed, one has
\begin{align}
&\left|\,\e^{-\varepsilon z}\bigg\langle g_\varepsilon(\lambda),
\int_{-\lambda}^\infty\d\nu\,\e^{i\nu z}
\left(\frac{\nu+\lambda}{\lambda}\right)^{1/4}f(\nu+\lambda)
\bigg\rangle_{\H_s,\H_{-s}}\right|\nonumber\\
&\le\big\|g_\varepsilon(\lambda)\big\|_{\H_s}\;\!\bigg\|\int_{-\lambda}^\infty\d\nu\,
\e^{i\nu z}\left(\frac{\nu+\lambda}{\lambda}\right)^{1/4}
f(\nu+\lambda)\bigg\|_{\H_{-s}},\label{dure}
\end{align}
and we know from Lemma \ref{lemlimite} and the paragraph following it that
$g_\varepsilon(\lambda)$ converges to
$
g_0(\lambda):=\lambda^{1/4}\;\!T(\lambda+i0)\F_0^*(\lambda)\varphi(\lambda)
$
in $\H_s$ as $\varepsilon\searrow0$. Therefore, the family
$\|\;\!g_\varepsilon(\lambda)\|_{\H_s}$ (and thus the r.h.s. of \eqref{dure}) is
bounded by a constant independent of $\varepsilon\in(0,1)$.

In order to exchange the integral over $z$ and the limit $\varepsilon\searrow0$ in
\eqref{eq2}, it remains to show that the second factor in \eqref{dure} belongs to
$\lone(\R_+,\d z)$. For that purpose, we denote by $h_\lambda$ the trivial extension
of the function
$
(-\lambda,\infty)\ni\nu\mapsto
\big(\frac{\nu+\lambda}{\lambda}\big)^{1/4}f(\nu+\lambda)\in\H_{-s}
$
to all of $\R$, and then note that the second factor in \eqref{dure} can be rewritten
as $(2\pi)^{1/2}\|(\F_1^*h_\lambda)(z)\|_{\H_{-s}}$, with $\F_1$ the one-dimensional
Fourier transform. To estimate this factor, observe that if $P_1$ denotes the
self-adjoint operator $-i\nabla$ on $\R$, then
$$
\big\|\big(\F_1^*h_\lambda\big)(z)\big\|_{\H_{-s}}
=\langle z\rangle^{-2}
\big\|\big(\F_1^*\langle P_1\rangle^2h_\lambda\big)(z)\big\|_{\H_{-s}},
\quad z\in\R_+\;\!.
$$
Consequently, one would have that
$
\|(\F_1^*h_\lambda)(z)\|_{\H_{-s}}\in\lone(\R_+,\d z)
$
if the norm
$
\big\|\big(\F_1^*\langle P_1\rangle^2h_\lambda\big)(z)\big\|_{\H_{-s}}
$
were bounded independently of $z$. Now, if $\psi=\eta\otimes\xi$ with
$\eta\in C_{\rm c}^\infty(\R_+)$ and $\xi\in C(\S^2)$, then one has for any
$x\in\R^3$
$$
\big(f(\nu+\lambda)\big)(x)
=\frac1{4\pi^{3/2}}\;\!\eta(\nu+\lambda)\int_{\S^2}\d\omega\,
\e^{i\sqrt{\nu+\lambda}\;\!\omega\cdot x}\xi(\omega).
$$
Therefore, one has
\begin{equation}\label{hahaha}
\big(h_\lambda(\nu)\big)(x)=
\begin{cases}
\frac1{4\pi^{3/2}}\left(\frac{\nu+\lambda}{\lambda}\right)^{1/4}\eta(\nu+\lambda)
\int_{\S^2}\d\omega\,\e^{i\sqrt{\nu+\lambda}\;\!\omega\cdot x}\xi(\omega)
& \nu>-\lambda\\
0 & \nu\le-\lambda,
\end{cases}
\end{equation}
which in turns implies that
$$
\big|\big\{\big(\F_1^*\langle P_1\rangle^2h_\lambda\big)(z)\big\}(x)\big|
\le{\rm Const.}\;\!\langle x\rangle^2,
$$
with a constant independent of $x\in\R^3$ and $z\in\R_+$. Since the r.h.s. belongs to
$\H_{-s}$ for $s>7/2$, one concludes that
$\big\|\big(\F_1^*\langle P_1\rangle^2h_\lambda\big)(z)\big\|_{\H_{-s}}$ is bounded
independently of $z$ for each $\psi=\eta\otimes\xi$, and thus for each
$\psi\in C^\infty_{\rm c}(\R_+)\odot C(\S^2)$ by linearity. As a consequence, one can
apply Lebesgue dominated convergence theorem and obtain that \eqref{eq2} is equal to
$$
-i\;\!\bigg\langle g_0(\lambda),\int_0^\infty\d z\int_\R\d\nu\;\!
\e^{i\nu z}h_\lambda(\nu)\bigg\rangle_{\H_s,\H_{-s}}.
$$

With this equality, one has concluded the first part of the proof; that is, one has
justified the equality between the expression \eqref{start} and
$\big\langle\;\!\F_0(W_\pm-1)\;\!\F_0^*\varphi,\psi\big\rangle_{\!\Hrond}$ on the
dense sets of vectors introduced at the beginning of the proof.

The next task is to show that
$
\big\langle\;\!\F_0(W_\pm-1)\;\!\F_0^*\varphi,\psi\big\rangle_{\!\Hrond}
$
is equal to
$
\big\langle-2\pi i\;\!M\big\{\vartheta(A_+)\otimes1_{\H_{-s}}\big\}B\varphi,
\psi\big\rangle_{\!\Hrond}
$.
For that purpose, we write $\chi_+$ for the characteristic function for $\R_+$. Since
$h_\lambda$ has compact support, we obtain the following equalities in the sense of
distributions (with values in $\H_{-s}$)\;\!:
\begin{align*}
\int_0^\infty\d z\int_\R\d\nu\;\!\e^{i\nu z}h_\lambda(\nu)
&=\sqrt{2\pi}\int_\R\d\nu\,\big(\F_1^*\chi_+\big)(\nu)\;\!h_\lambda(\nu)\\
&=\sqrt{2\pi}\int_{-\lambda}^\infty\d\nu\,\big(\F_1^*\chi_+\big)(\nu)
\left(\frac{\nu+\lambda}{\lambda}\right)^{1/4}f(\nu+\lambda)\\
&=\sqrt{2\pi}\int_\R\d\mu\,\big(\F_1^*\chi_+\big)\big(\lambda(\e^\mu-1)\big)
\;\!\lambda\e^{5\mu/4}f(\e^\mu\lambda)\qquad(\e^\mu\lambda:=\nu+\lambda)\\
&=\sqrt{2\pi}\int_\R\d\mu\,\big(\F_1^*\chi_+\big)\big(\lambda(\e^\mu-1)\big)
\;\!\lambda\e^{3\mu/4}\big\{\big(U_\mu^+\otimes1_{\H_{-s}}\big)f\big\}(\lambda).
\end{align*}
Then, by using the fact that
$
\F_1^*\chi_+
=\sqrt{\frac\pi2}\;\!\delta_0+\frac i{\sqrt{2\pi}}\;\!\Pv\frac1{(\;\!\cdot\;\!)}
$
with $\delta_0$ the Dirac delta distribution and $\Pv$ the principal value, one gets
that
$$
\int_0^\infty\d z\int_\R\d\nu\;\!\e^{i\nu z}h_\lambda(\nu)
=\int_\R\d\mu\left(\pi\;\!\delta_0(\e^\mu-1)
+i\;\!\Pv\frac{\e^{3\mu/4}}{\e^\mu-1}\right)
\big\{\big(U_\mu^+\otimes1_{\H_{-s}}\big)f\big\}(\lambda).
$$
So, by considering the identity
$$
\frac{\e^{3\mu/4}}{\e^\mu-1}
=\frac14\left(\frac1{\sinh(\mu/4)}+\frac1{\cosh(\mu/4)}\right)
$$
and the equality \cite[Table 20.1]{Jef95}
\begin{equation*}
\big(\F_1\bar\vartheta\big)(\nu)
:=\sqrt{\frac\pi2}\;\!\delta_0\big(\e^\nu-1\big)
+\frac i{4\sqrt{2\pi}}\;\!\Pv\left(\frac1{\sinh(\nu/4)}+\frac1{\cosh(\nu/4)}\right),
\end{equation*}
with $\vartheta$ defined in \eqref{defvar}, one infers that
\begin{align*}
&\big\langle\F_0(W_--1)\;\!\F_0^*\varphi,\psi\big\rangle_{\!\Hrond}\\
&=i\int_{\R_+}\d\lambda\,\bigg\langle g_0(\lambda),\int_\R\d\mu\,
\bigg\{\pi\;\!\delta_0\big(\e^\mu-1\big)\\
&\hspace{120pt}+\frac i4\;\!\Pv\left(\frac1{\sinh(\mu/4)}
+\frac1{\cosh(\mu/4)}\right)\bigg\}
\big\{\big(U_\mu^+\otimes1_{\H_{-s}}\big)f\big\}(\lambda)
\bigg\rangle_{\H_s,\H_{-s}}\\
&=i\sqrt{2\pi}\int_{\R_+}\d\lambda\left\langle g_0(\lambda),\int_\R\d\mu\,
\big(\F_1\bar\vartheta\big)(\mu)\;\!
\big\{\big(U_\mu^+\otimes1_{\H_{-s}}\big)f\big\}(\lambda)
\right\rangle_{\H_s,\H_{-s}}.
\end{align*}
Finally, by recalling that
$
\big\{\vartheta(A_+)\otimes1_{\H_{-s}}\big\}f
=\frac1{\sqrt{2\pi}}\int_\R\d\mu\,\big(\F_1\bar\vartheta\big)(\mu)
\big(U_\mu^+\otimes 1_{\H_{-s}}\big)f
$,
that $g_0(\lambda)=(B\varphi)(\lambda)$ and that $f=M^*\psi$, one obtains
\begin{align*}
\big\langle\F_0(W_--1)\F_0^*\varphi,\psi\big\rangle_{\!\Hrond}
&=2\pi i\int_{\R_+}\d\lambda\,\big\langle(B\varphi)(\lambda),
\big\{\big(\vartheta(A_+)^*\otimes1_{\H_{-s}}\big)M^*\psi\big\}(\lambda)
\big\rangle_{\H_s,\H_{-s}}\\
&=\big\langle-2\pi i\;\!M\;\!\big\{\vartheta(A_+)\otimes1_{\H_{-s}}\big\}B\varphi,
\psi\big\rangle_{\!\Hrond}.
\end{align*}
This concludes the proof, since the sets of vectors $\varphi\in C_{\rm c}(\R_+;\HS)$
and $\psi\in C_{\rm c}^\infty(\R_+)\odot C(\S^2)$ are dense in $\Hrond$.
\end{proof}

We now derive a technical lemma which will be essential for the proof of Theorem
\ref{Java}.

\begin{Lemma}\label{Lemma_compact}
Take $s>-1$ and $t>3/2$. Then, the difference
$$
\big\{\vartheta(A_+)\otimes1_{\HS}\big\}M
-M\big\{\vartheta(A_+)\otimes 1_{\H^s_t}\big\}
$$
belongs to $\K\big(\ltwo(\R_+;\H_t^s),\Hrond\big)$.
\end{Lemma}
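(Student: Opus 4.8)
The plan is to diagonalise the dilation generator $A_+$ via the Mellin transform, which converts $\vartheta(A_+)$ into a Fourier multiplier and converts $M$ into an operator of multiplication by an operator-valued function; the assertion then reduces to the compactness of a commutator between a function of momentum and a function of position. Concretely, let $\mathcal U:\ltwo(\R_+)\to\ltwo(\R)$ be the unitary operator $(\mathcal U f)(x):=\e^{x/2}f(\e^x)$. A direct computation shows that $\mathcal U\;\!U^+_\tau\;\!\mathcal U^*$ is the translation $g\mapsto g(\;\!\cdot+\tau)$, so that $\mathcal U A_+\mathcal U^*=P$ with $P:=-i\;\!\frac{\d}{\d x}$ the generator of translations on $\ltwo(\R)$, and hence $\mathcal U\;\!\vartheta(A_+)\;\!\mathcal U^*=\vartheta(P)$ is a bounded Fourier multiplier. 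Writing $\Gamma(\lambda):=\lambda^{-1/4}\F_0(\lambda)$, so that $M$ is the operator of multiplication by the $\B(\H^s_t,\HS)$-valued function $\Gamma$ (see \eqref{defdeM}), one checks that $(\mathcal U\otimes1_\HS)\;\!M\;\!(\mathcal U\otimes1_{\H^s_t})^*$ is the operator $\mathcal M$ of multiplication by $x\mapsto\tilde\Gamma(x):=\Gamma(\e^x)$. Since unitary conjugation preserves compactness, it suffices to show that
\[
\big\{\vartheta(P)\otimes1_\HS\big\}\mathcal M-\mathcal M\big\{\vartheta(P)\otimes1_{\H^s_t}\big\}
\in\K\big(\ltwo(\R;\H^s_t),\ltwo(\R;\HS)\big).
\]

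Two facts drive the argument. First, Lemma \ref{lem2} guarantees that $\Gamma=\lambda^{-1/4}\F_0(\lambda)$ is a continuous $\K(\H^s_t,\HS)$-valued function which admits a limit as $\lambda\searrow0$ and vanishes as $\lambda\to\infty$; after the change of variable $\lambda=\e^x$ this means that $\tilde\Gamma$ is continuous, tends to a compact operator $\Gamma_{-\infty}$ as $x\to-\infty$, and vanishes as $x\to+\infty$. Second, the function $\vartheta$ of \eqref{defvar} is continuous and bounded with $\lim_{\nu\to-\infty}\vartheta(\nu)=1$ and $\lim_{\nu\to+\infty}\vartheta(\nu)=0$. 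I would therefore fix a continuous step function $\theta$ on $\R$ with $\theta(-\infty)=1$ and $\theta(+\infty)=0$, and split each factor into a constant-at-infinity piece and a piece vanishing at both ends:
\[
\tilde\Gamma(x)=\theta(x)\;\!\Gamma_{-\infty}+\tilde\Gamma_0(x),\qquad\vartheta=\theta+\vartheta_0,
\]
where $\tilde\Gamma_0\in C_0\big(\R;\K(\H^s_t,\HS)\big)$ and $\vartheta_0\in C_0(\R)$.

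Substituting these decompositions, the commutator expands into four pieces, which I would treat as follows. The term coupling the two constant-at-infinity pieces equals $[\theta(P),\theta(X)]\otimes\Gamma_{-\infty}$, where $X$ denotes multiplication by $x$ on $\ltwo(\R)$. The two mixed terms equal $[\vartheta_0(P),\theta(X)]\otimes\Gamma_{-\infty}$ and, after approximating $\tilde\Gamma_0$ in supremum norm by finite sums $\sum_j f_j(\;\!\cdot\;\!)\;\!K_j$ with $f_j\in C_0(\R)$ and $K_j\in\K(\H^s_t,\HS)$ of finite rank, a norm limit of operators $\sum_j[\theta(P),f_j(X)]\otimes K_j$. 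The term coupling the two vanishing pieces is, by the same approximation, a norm limit of operators $\sum_j[\vartheta_0(P),f_j(X)]\otimes K_j$. In each case one is left with scalar commutators of the form $[\phi(P),\psi(X)]$ with $\phi,\psi$ continuous and admitting limits at $\pm\infty$; such commutators are compact on $\ltwo(\R)$ by the standard result on functions of position and momentum. Since the tensor product of a compact operator with a bounded fibre operator is compact, and since norm limits of compact operators are compact, each of the four pieces lies in $\K\big(\ltwo(\R;\H^s_t),\ltwo(\R;\HS)\big)$, and undoing the Mellin conjugation finishes the proof.

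The crux, and the only genuinely non-trivial point, is that fibrewise reasoning alone cannot work: the multiplication operator $\mathcal M$ has compact fibres yet is itself never compact, so compactness of the difference can only arise from the cancellation produced by averaging the dilation over its parameter. The Mellin picture encodes exactly this cancellation by turning the problem into a genuine commutator, and the decay of $\tilde\Gamma$ at both ends supplied by Lemma \ref{lem2} is what lets the constant-at-infinity pieces be peeled off so that every surviving scalar commutator falls under the classical compactness result. The supporting technical points, namely the uniform approximation of the compact-operator-valued function $\tilde\Gamma_0$ by finite-rank-valued functions and the boundedness of $\vartheta(P)$ and $\theta(P)$ used to control the approximation errors, are routine.
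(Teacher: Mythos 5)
Your proof is correct and follows essentially the same route as the paper's: conjugation by the Mellin transform turns $\vartheta(A_+)$ into a function of momentum and $M$ into multiplication by an operator-valued function admitting limits at both ends of $\R$ (via Lemma \ref{lem2}), and compactness then comes from Cordes' commutator theorem combined with approximation of the symbol by finite sums of elementary tensors. The paper is slightly more economical at the approximation step: it approximates $m\in C\big([0,\infty];\K(\H^s_t,\HS)\big)$ directly by elements of $C([0,\infty])\odot\K(\H^s_t,\HS)$, allowing the scalar coefficient functions to have arbitrary limits at the endpoints, so your preliminary decomposition $\tilde\Gamma=\theta\;\!\Gamma_{-\infty}+\tilde\Gamma_0$, $\vartheta=\theta+\vartheta_0$ and the resulting four-term expansion are not needed. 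One caveat: the principle you invoke at the end, that the tensor product of a compact operator with a bounded fibre operator is compact, is false in general ($K\otimes1_{\HS}$ is never compact for $K\neq0$ when $\HS$ is infinite dimensional); what saves your argument is that every fibre operator appearing in your four pieces, namely $\Gamma_{-\infty}$ (a norm limit of compact operators, hence compact) and the finite-rank $K_j$, is itself compact, so the correct statement --- the tensor product of two compact operators is compact, see \cite{Hol72} --- applies.
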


\begin{proof}
(i) The unitary operator $\GG:\ltwo(\R)\to\ltwo(\R_+)$ given by
$$
(\GG f)(\lambda):=\lambda^{-1/2}f\big(\ln(\lambda)\big),
\quad f\in C^\infty_{\rm c}(\R),~\lambda\in\R_+,
$$
satisfies $(\GG^*\;\!U_\tau^+\;\!\GG f)(x)=f(x+\tau)$ and
$(\GG^*\e^{i\tau\ln(\LL)}\GG f)(x)=\e^{i\tau x}f(x)$ for each $x,\tau\in\R$, with
$\LL$ the maximal multiplication operator in $\ltwo(\R_+)$ by the variable in $\R_+$.
It follows that $\GG^*A_+\GG=P_1$ on $\dom(P_1)$ and that $\GG^*\ln(\LL)\;\!\GG=X_1$
on $\dom(X_1)$, with $P_1$ and $X_1$ the self-adjoint operators of momentum and
position in $\ltwo(\R)$.

Now, take $f_1,f_2$ two complex-valued continuous functions on $\R$ having limits at
$\pm\infty$; that is, $f_1,f_2\in C([-\infty,\infty])$. Then, a standard result of
Cordes implies the inclusion $[f_1(P_1),f_2(X_1)]\in\K\big(\ltwo(\R)\big)$ (see for
instance \cite[Thm.~4.1.10]{ABG}). Conjugating this inclusion with the unitary
operator $\GG$, one thus infers that
$\big[f_1(A_+),f_3(\LL)\big]\in\K\big(\ltwo(\R_+)\big)$ with
$f_3:=f_2\circ\ln\in C([0,\infty])$.

(ii) We know from Lemma \ref{lem2} and Definition \eqref{defdeM} that
$$
(M\xi)(\lambda):=m(\lambda)\;\!\xi(\lambda),
\quad\xi\in C_{\rm c}(\R_+;\H^s_t),~\lambda\in\R_+,
$$
with $m\in C\big([0,\infty];\K(\H^s_t,\HS)\big)$. We also know that the algebraic
tensor product $C([0,\infty])\odot\K(\H^s_t,\HS)$ is dense in
$C\big([0,\infty];\K(\H^s_t,\HS)\big)$, when $C\big([0,\infty];\K(\H^s_t,\HS)\big)$
is equipped with the uniform topology (see \cite[Thm.~1.15]{Pro77}). So, for each
$\varepsilon>0$ there exist $n\in\N^*$, $a_j\in C([0,\infty])$ and
$b_j\in\K(\H^s_t,\HS)$ such that such that
$
\big\|M-\sum_{j=1}^n a_j(\LL)\otimes b_j\big\|_{\B(\ltwo(\R_+;\H^s_t),\Hrond)}
<\varepsilon.
$
Therefore, in order to prove the claim, it is sufficient to show that the operator
\begin{equation}\label{diff_Cn}
\big\{\vartheta(A_+)\otimes1_{\HS}\big\}\Bigg\{\sum_{j=1}^na_j(\LL)\otimes b_j\Bigg\}
-\Bigg\{\sum_{j=1}^na_j(\LL)\otimes b_j\Bigg\}
\big\{\vartheta(A_+)\otimes 1_{\H^s_t}\big\}
=\sum_{j=1}^n\big[\vartheta(A_+),a_j(\LL)\big]\otimes b_j
\end{equation}
is compact. But, we know that $b_j\in \K(\H^s_t,\HS)$ and that
$\big[\vartheta(A_+),a_j(\LL)\big]\in\K\big(\ltwo(\R_+)\big)$ due to point (i). So,
it immediately follows that the operator \eqref{diff_Cn} is compact, since finite
sums and tensor products of compact operators are compact operators (see
\cite[Thm.~2]{Hol72}).
\end{proof}

Before giving the proof of Theorem \ref{Java}, we recall the action of the dilation
group $\{U_\tau\}_{\tau\in\R}$ in $\H$, namely,
$$
\big(U_\tau f\big)(x):=\e^{3\tau/2}f(\e^\tau x),
\quad f\in C_{\rm c}(\R^3),~x\in\R^3,~\tau\in\R,
$$
and denote its self-adjoint generator by $A$. The image $\F_0R(A)\;\!\F_0^*$ of
$R(A):=\frac12\big(1+\tanh(\pi A)-i\cosh(\pi A)^{-1}\big)$ in $\B(\Hrond)$ can be
easily computed. Indeed, one has the decomposition $\F_0=\U\F$, with $\U:\H\to\Hrond$
given by
$
\big((\U f)(\lambda)\big)(\omega)
:=\big(\frac\lambda4\big)^{1/4}f(\sqrt\lambda\;\!\omega)
$
for each $f\in\SS$, $\lambda\in\R_+$ and $\omega\in\S^2$. Furthermore, one has the
identities $\F A\;\!\F^*=-A$ on $\dom(A)$ and $\U A\;\!\U^*=2A_+\otimes 1_{\HS}$ on
$\dom(A_+\otimes1_{\HS})$. Therefore, one obtains that
$$
\F_0R(A)\!\;\F_0^*=\vartheta(A_+)\otimes 1_{\HS}.
$$

\begin{proof}[Proof of Theorem \ref{Java}]
Set $s=0$ and $t\in(7/2,\sigma-7/2)$. Then, we deduce from Theorem \ref{BigMama},
Lemma \ref{Lemma_compact} and the above paragraph that
\begin{align*}
W_--1
&=-2\pi i\;\!\F_0^*M\big\{\vartheta(A_+)\otimes1_{\H_{\sigma-t}}\big\}B\;\!\F_0\\
&=-2\pi i\;\!\F_0^*\big\{\vartheta(A_+)\otimes 1_\HS\big\}MB\;\!\F_0+K\\
&=R(A)\;\!\F_0^*(-2\pi iMB)\;\!\F_0+K,
\end{align*}
with $K\in\K(\H)$. Comparing $-2\pi iMB$ with the usual expression for the scattering
matrix $S(\lambda)$ (see for example \cite[Eq.~(5.1)]{JK79}), one observes that
$-2\pi iMB=\int_{\R_+}^\oplus\d\lambda\,\big(S(\lambda)-1\big)$. Since $\F_0$ defines
the spectral representation of $H_0$, one obtains that
\begin{equation}\label{form_BPU}
W_--1=R(A)(S-1)+K.
\end{equation}
The formula for $W_+-1$ follows then from \eqref{form_BPU} and the relation
$W_+=W_-\;\!S^*$.
\end{proof}

\begin{Remark}\label{Rem_comments}
Formulas \eqref{jolieformule} were already obtained in \cite{KR12} under an implicit
assumption. The only difference is that the operator $R(A)$ is replaced in
\cite{KR12} by an operator $\boldsymbol\varphi(A)$ slightly more complicated. The
resulting formulas for the wave operators differ by a compact term, but compact
operators do not play any role in the algebraic construction (both expressions for
the wave operators belong to the $C^*$-algebra constructed in \cite[Sec.~4]{KR12} and
thus coincide after taking the quotient by the ideal of compact operators).
Consequently, the topological approach of Levinson's theorem presented in \cite{KR12}
also applies here, with the implicit assumption no longer necessary.
\end{Remark}


\begin{thebibliography}{10}

\bibitem{ABG}
W.~O. Amrein, A.~{{Boutet de Monvel}}, V.~Georgescu,
\emph{${C_0}$-groups, commutator methods and spectral theory of
  ${N}$-body Hamiltonians}, Progress in Math. {\bf 135},
Birkh\"auser, Basel, 1996.

\bibitem{AJS}
W.~O. Amrein, J.~M. Jauch, K.~B. Sinha,
\emph{Scattering theory in quantum mechanics. Physical principles and mathematical methods},
Lecture Notes and Supplements in Physics {\bf 16}, W. A. Benjamin, Inc., Reading, Mass.-London-Amsterdam, 1977.

\bibitem{BW83}
H.~Baumg{\"a}rtel, M.~Wollenberg,
\newblock {\em Mathematical scattering theory}, volume~9 of {\em Operator
  Theory: Advances and Applications},
\newblock Birkh\"auser Verlag, Basel, 1983.

\bibitem{BS12}
J. Bellissard, H. Schulz-Baldes,
\emph{Scattering theory for lattice operators in dimension $d\geq 3$},
Reviews in Mathematical Physics {\bf 24}, no. 08, 1250020, 2012.

\bibitem{Bec11}
M.~Beceanu,
\emph{Structure of wave operators in $\R^3$},
to appear in AJM, and preprint on ArXiv 1101.0502.

\bibitem{Bec12}
M.~Beceanu,
\emph{Dispersive estimates in $\R^3$ with threshold resonances},
Preprint ArXiv 1201.5331.

\bibitem{ES04}
B. Erdo\^gan, W. Schlag,
\emph{Dispersive estimates for Schr\"odinger operators in the presence of a resonance and/or an eigenvalue at zero energy in dimension three. I},
Dyn. Partial Differ. Equ. {\bf 1} no. 4, 359--379, 2004.

\bibitem{ES06}
B. Erdo\^gan, W. Schlag,
\emph{Dispersive estimates for Schr\"odinger operators in the presence of a resonance and/or an eigenvalue at zero energy in dimension three. II},
J. Anal. Math. {\bf 99}, 199--248, 2006.

\bibitem{Hol72}
J.~R. Holub,
\emph{Compactness in topological tensor products and operator spaces},
Proc. Amer. Math. Soc. {\bf 36}, 398--406, 1972.

\bibitem{IR12}
H. Isozaki, S. Richard,
\emph{On the wave operators for the Friedrichs-Faddeev model},
Ann. Henri Poincar\'e {\bf 13}, 1469--1482, 2012.

\bibitem{Jef95}
A.~Jeffrey,
\emph{Handbook of mathematical formulas and integrals},
Academic Press, Inc., San Diego, CA, 1995.

\bibitem{Jen81} A.~Jensen,
\emph{Time-delay in potential scattering theory. Some "geometric'' results},
Comm. Math. Phys. {\bf 82} no. 3, 435--456, 1981/1982.

\bibitem{JK79} A. Jensen, T. Kato,
\emph{Spectral properties of Schr\"odinger
operators and time-decay of the wave functions}, Duke Math. J. {\bf 46} no. 3, 583--611, 1979.

\bibitem{JN94}
A.~Jensen, S.~Nakamura,
\emph{Mapping properties of wave and scattering operators of two-body Schr\"odinger operators},
Lett. Math. Phys. {\bf 24} no. 4, 295--305, 1994.

\bibitem{KPR} J. Kellendonk, K. Pankrashkin, S. Richard,
\emph{Levinson's theorem and higher degree traces for the Aharonov-Bohm operators},
J. Math. Phys. {\bf 52}, 052102, 2011.

\bibitem{KR06} J. Kellendonk, S. Richard,
\emph{Levinson's theorem for Schr\"odinger operators with point interaction: a topological approach}, J. Phys. A {\bf 39} no. 46, 14397--14403, 2006.

\bibitem{KR08} J. Kellendonk, S. Richard,
\emph{On the structure of the wave operators in one dimensional potential scattering}, Mathematical Physics Electronic Journal {\bf 14}, 1--21, 2008.

\bibitem{KR12} J.~Kellendonk, S.~Richard,
\emph{On the wave operators and Levinson's theorem for potential scattering in $\R^3$},
Asian-European Journal of Mathematics {\bf 5}, 1250004-1--1250004-22, 2012.

\bibitem{Kur78}
S.~T. Kuroda,
{\em An introduction to scattering theory}, Lecture Notes Series {\bf 51},
 Aarhus Universitet Matematisk Institut, Aarhus, 1978.

\bibitem{PR11}
K. Pankrashkin, S. Richard,
\emph{Spectral and scattering theory for the Aharonov-Bohm operators},
Rev. Math. Phys. {\bf 23}, 53--81, 2011.

\bibitem{Pea88}
D.~B. Pearson,
\emph{Quantum scattering and spectral theory},
Techniques of Physics {\bf 9},
Academic Press Inc. [Harcourt Brace Jovanovich Publishers], London, 1988.

\bibitem{Pro77}
J.~B. Prolla,
\emph{Approximation of vector valued functions},
North-Holland Mathematics Studies {\bf 25}, Notas de Matem\'atica, No. 61, North-Holland Publishing Co., Amsterdam-New York-Oxford, 1977.

\bibitem{RS3}
M.~Reed, B.~Simon,
\emph{Methods of modern mathematical physics. III. Scattering theory}.
Academic Press [Harcourt Brace Jovanovich, Publishers], New York-London, 1979.

\bibitem{RT10}
S. Richard, R. Tiedra de Aldecoa,
\emph{New formulae for the wave operators for a rank one interaction},
Integral Equations and Operator Theory {\bf 66}, 283--292, 2010.

\bibitem{Yaf92}
D.~R. Yafaev,
\emph{Mathematical scattering theory}, Translations of Mathematical
Monographs {\bf 105}, American Mathematical Society, Providence, RI, 1992.

\bibitem{Yaf10}
D.~R. Yafaev,
\emph{Mathematical scattering theory. Analytic theory}, Mathematical Surveys and Monographs {\bf 158}, American Mathematical Society, Providence, RI, 2010.

\bibitem{Yaj95}
K.~Yajima,
\emph{The $W^{k,p}$-continuity of wave operators for Schr\"odinger operators},
J. Math. Soc. Japan {\bf 47} no. 3, 551--581, 1995.

\bibitem{Yaj97}
K.~Yajima,
\emph{$L^p$-continuity of wave operators for Schr\"odinger operators and its applications},
In Proceedings of the Korea-Japan Partial Differential Equations Conference (Taejon, 1996), 13 pp.,
Lecture Notes Ser. {\bf 39}, Seoul Nat. Univ., Seoul, 1997.

\bibitem{Yaj05}
K.~Yajima,
\emph{Dispersive estimates for Schr\"odinger equations with threshold resonance and eigenvalue}, Comm. Math. Phys. {\bf 259} no. 2, 475--509, 2005.

\bibitem{Yaj06}
K. Yajima,
\emph{The $L^p$ boundedness of wave operators for Schr\"odinger operators with
threshold singularities I. The odd dimensional case}, J. Math. Sci. Univ. Tokyo {\bf 13} no.~1, 43--93, 2006.

\end{thebibliography}


\end{document}